\documentclass{article}

\usepackage{arxiv}

\usepackage[utf8]{inputenc} 
\usepackage[T1]{fontenc}    
\usepackage{hyperref}       
\usepackage{url}            
\usepackage{booktabs}       
\usepackage{amsfonts}       
\usepackage{nicefrac}       
\usepackage{microtype}      
\usepackage{lipsum}		
\usepackage{graphicx}
\usepackage{doi}
\usepackage{bm}
\usepackage{amsthm}
\usepackage{amsmath}
\theoremstyle{plain}
\newtheorem{theorem}{Theorem}

\newtheorem{corollary}{Corollary}
\newtheorem{proposition}[theorem]{Proposition}

\theoremstyle{definition}
\newtheorem{definition}{Definition}

\theoremstyle{remark}
\newtheorem{remark}{Remark}

\title{Maximum entropy probability distributions on spheres with fixed mean Busemann function and holomorphic-information-geometric model of cognition}

\author{Vladimir Ja\' cimovi\' c \\
	Faculty of Natural Sciences and Mathematics\\
	University of Montenegro\\
	Cetinjski put bb., 81000 Podgorica\\
	Montenegro\\
	\texttt{vladimirj@ucg.ac.me} \\
}

\renewcommand{\shorttitle}{MaxEnt on spheres}

\hypersetup{
pdftitle={Maximum entropy probability distributions on spheres with fixed mean Busemann function and holomorphic-information-geometric model of cognition},
pdfsubject={-},
pdfauthor={Vladimir Ja\' cimovi\' c},
pdfkeywords={Bergman ball, Poincar\' e ball, Free energy, RKHS},
}

\begin{document}
\maketitle

\bigskip

{\bf Abstract}

In the first half of the paper we revisit the question regarding MaxEnt probability distributions on spheres. We derive families of MaxEnt distributions on spheres in real and complex vector spaces with fixed expected Busemann function (energy). As a particular case, we deduce sub-families on canonical energy levels where inverse temperature equals the dimension of the sphere. In the second part we focus on the information manifold of canonical MaxEnt distributions on the sphere in the complex vector space. This manifold is isomorphic to the Bergman ball. We introduce the reproducing kernel on this manifold and use the RKHS theory to elaborate the model of cognition. In particular, we state the principle of minimal cognitive effort in RKHS and demonstrate its dual relationship with the MaxEnt principle for probability distributions on the boundary sphere.

\bigskip

\section{Introduction}\label{sec:1}

The Principle of Maximum Entropy (MaxEnt) is one of the most universal principles in science. It was the central principle of thermodynamics and statistical mechanics since their beginnings in the 19th century, but its reach extends far beyond classical physics. Most notably, MaxEnt is the foundation of cognition and reasoning. This interpretation was formalized in the mid-20th century when Jaynes reimagined entropy as a measure of the mind projection fallacy \cite{Jaynes}. This led to Objective Bayesianism and Probability as Extended Logic. In this hypostasis, the MaxEnt principle serves as a computational tool for logical inference, asserting that the most objective probability distribution is the one that maximizes uncertainty while satisfying known constraints. Although various manifestations of the MaxEnt principle in statistics are well known and established as rigorous theorems, they remain frequently revisited. This continuous reexamination highlights an ever-lasting question: how varying interpretations of missing information and physical constraints reshape our understanding of complex systems. Consequently, exploring MaxEnt distributions is not a solved mathematical problem, but an evolving framework for updating hypotheses about the world from minimal assumptions \cite{Entropy}.

The notion of free energy serves as a bridge between microscopic states and macroscopic system behaviors in statistical mechanics. While the MaxEnt principle in Objective Bayesianism highlights the most objective distributions under general constraints, introducing energetic bounds yields Boltzmann-Gibbs distributions as equilibrium states of a system. These distributions formalize the fundamental balance between minimizing internal energy and maximizing entropy. Furthermore, the generalization of this concept into variational free energy positions this statistical-mechanical framework as one of the foundational paradigms of modern machine learning and inference \cite{JGJS,Friston}.

The concepts of entropy and free energy can be extended to general metric spaces. This includes positively curved Riemannian manifolds, where probability distributions are studied within the sub-field known as directional statistics \cite{MJ}. The simplest Riemannian manifold is the circle. Hence, a good starting point for our exposition is the following question: What are the MaxEnt distributions on the circle $\mathbb{S}^1$? 

The routine answer is that these are von Mises distributions, defined by the following density functions 
\begin{equation}
\label{von Mises}
p_{vM}(\varphi;\mu,\kappa) = \frac{1}{2 \pi I_0(\kappa)} exp\{\kappa \cos(\varphi - \mu)\}, \quad \varphi \in [0,2 \pi),
\end{equation}
where $\mu \in \mathbb{S}^1, \, \kappa>0$ are parameters and $I_0(\cdot)$ denotes the first order Bessel function.

  More precise, mathematically rigorous assertion is that von Mises densities \eqref{von Mises} maximize entropy for fixed first directional moment (the centroid). 
  
  One could extend to higher dimensions: What are the MaxEnt distributions on spheres? Analogously, the routine answer would be that these are von Mises-Fisher distributions defined by the following densities on spheres $\mathbb{S}^{d-1}$ \cite{MJ}:
\begin{equation}
\label{von Mises-Fisher}
p_{vMF}(x;\mu,\kappa) = C_d(\kappa) exp \{ \kappa \mu^T x\}, \quad x \in \mathbb{S}^{d-1},
\end{equation}
where $\mu \in \mathbb{S}^{d-1}, \, \kappa>0$ and $C_d$ is the normalizing constant. The notion $\mu^T$ stands for the transpose vector of $\mu$.

However, the notion of a mean is ambiguous on spheres and balls. This ambiguity further extends to the notion of energy. Consequently, alternative definitions of a mean may be more appropriate in many physical or biological setups. This ultimately boils down to the question of which geometry we assume in the interior balls. Very often, Euclidean metric, which turns balls into compact manifolds, is not physically justified assumption.

One notion of the mean of a probability measure on the sphere is conformal barycenter introduced by Douady and Earle in their seminal paper \cite{DE}. This barycenter is associated with the hyperbolic metric in the ball. It is defined as a unique (under mild conditions) stationary point of the expected Busemann function. 

The entire present study relies on the notion of the Busemann function \cite{Busemann}. This function acts as a directional potential field in the homogeneous manifold, determining orientations of paths, distances, and distributions aligned with the negative curvature. Hence, we will also use the terms Busemann energy or Busemann distance (although it is not the distance in the mathematically strict sense) for this function.

\subsection{Outline}

In Section 2 we introduce new families of MaxEnt distributions on spheres with fixed expected Busemann distance to an interior reference point. We start our analysis with the circle and its interior Poincar\' e disc. When extending to higher dimensions there are two non-equivalent generalizations, corresponding to two models of hyperbolic balls: Poincar\' e and Bergman. We solve isoperimetric problems to derive families of MaxEnt distributions for both models. Remarkably, normalizing constants can be computed explicitly for both models in all dimensions.

Our further exposition focuses on spheres in $\mathbb{C}^m$ with the interior balls which we equip with the Bergman metric. We introduce the Busemann free energy and solve the corresponding variational problem to rederive our MaxEnt family as Boltzmann-Gibbs distributions on the sphere in $\mathbb{C}^m$. We emphasize the canonical temperature scale, with the Lagrange multiplier equal to the complex dimension of the ball. In this case some parameters get eliminated and Boltzmann-Gibbs densities obtain a particularly elegant form. This reveals that on this canonical energy level the entire geometry of Bergman balls arises as a consequence of the free energy minimization. We continue by introducing a reproducing kernel on Bergman balls as a similarity measure between two canonical Boltzmann-Gibbs distributions.

 In the second part of the paper we apply MaxEnt distributions on spheres in $\mathbb{C}^m$ and reproducing kernel Hilbert spaces (RKHSs) in Bergman balls $\mathbb{\mathbb{B}}_{\mathbb{C}}^m$ to formalize a holomorphic information-geometric model of (collective) cognition. The proposed framework exhibits high expressive capacity, enabling it to capture subtle contextual effects and highly non-trivial patterns. The underlying cause is semantic richness of the Bergman ball, manifested in beautiful mathematical properties, such as large symmetry group $SU(m,1)$ of unitary Lorentz transformations, anisotropic metric, multifaceted equivariant kernels, and expressive RKHSs. This provides a compelling paradigm for learning complex cognitive phenomena and patterns that traditional Euclidean models fail to capture. 

In particular, by invoking the representer theorem for RKHSs \cite{HSS}, we establish the principle of minimal effort in cognition and relate it to our family of Boltzmann-Gibbs distributions on spheres in complex vector spaces. This connection suggests that cognitive processing can be modeled as a variational problem in the Bergman ball. By grounding optimization of the cognitive uncertainty in the language of statistical mechanics in Bergman balls, we develop a unified mathematical framework where cognitive economy and geometric constraints are two sides of the same coin. Since the Bergman ball appears in quantum theory as the space of $SU(m,1)$-coherent states, the interpretations in terms of quantum cognition and quantum decision-making arise naturally. We conclude the paper with the discussion on interpretations, strengths and limitations of the proposed model, followed by an outlook on possible extensions and potential applications in AI and deep learning.

\section{MaxEnt probability distributions on spheres on Busemann energy levels}

In this Section we expose a hyperbolic-geometric point of view on MaxEnt probability distributions on spheres. We stick to the standard Shannon differential entropy:
\begin{equation}
\label{entropy}
H(p) = - \int_{Sphere} p(x) \log p(x) d \sigma(x),
\end{equation}
where $p(x)$ is the density of a probability distribution and $\sigma(\cdot)$ is the metric element on the sphere.

We add the normalization constraint
\begin{equation}
\label{normalize}
\int_{Sphere} p(x) d \sigma(x) = 1.
\end{equation}

Throughout the paper we will assume that the measure on the sphere is normalized, so that
$$
\int_{Sphere} d \sigma(x) = 1.
$$
The exception is the circle, where we will use angular notations and keep the multiplier $1/(2 \pi)$. Normalization of the measure does not affect maxima of the entropy or minima of free energy. On the other hand, it introduces a constant (dimension dependent) multiplier into expression for the Fisher metric.

Of course, maximization of \eqref{entropy} under the constraint \eqref{normalize}, yields the uniform distribution on the sphere. Throughout this Section we impose an additional constraint by picking a point in the interior ball and fixing the mean Busemann distance to this reference point.

\subsection{MaxEnt probability distributions on the circle with fixed mean Busemann distance}

We start our analysis with the circle $\mathbb{S}^1$. Let $w = r e^{i \Phi}$ where $0 \leq r < 1$ be a point in the interior disc. Fix the mean value of the Busemann distance to the reference point $w$. Using the expression for the Busemann function in the disc, this constraint is written as
\begin{equation}
\label{Busemann_constraint_circle}
\int \limits_{0}^{2 \pi} p(\varphi) \log \frac{1-r^2}{1+r^2-2 r \cos (\varphi - \Phi)} d \varphi = const
\end{equation}

Consider the isoperimetric problem of maximizing the entropy \eqref{entropy} subject to constraints \eqref{normalize} and \eqref{Busemann_constraint_circle}. The Euler-Lagrange conditions for this problem yield
$$
\frac{\partial \cal L}{\partial p} = - \ln p - 1 - \alpha + \lambda \ln \frac{1-r^2}{1+r^2-2 r \cos (\varphi - \Phi)} = 0.
$$
Hence,
$$
\ln p(\varphi) = - (1 + \alpha) + \ln \left( \frac{1-r^2}{1+r^2-2r \cos(\varphi-\Phi)} \right)^\lambda
$$
and 
$$
p(\varphi) = e^{-(1 + \alpha)} \left( \frac{1-r^2}{1+r^2-2r \cos(\varphi-\Phi)} \right)^\lambda.
$$
By introducing the normalization constant $Z(\lambda,r^2)$ we rewrite the MaxEnt density as
$$
p(\varphi) = \frac{1}{Z(\lambda,r)} \frac{1}{(1+r^2-2r \cos(\varphi - \Phi))^\lambda}.
$$
By integrating $p(\varphi)$ on $\varphi \in [0,2 \pi)$ we find an explicit expression for the normalization constant:
\begin{equation}
\label{MaxEnt_Busemann_circle}
p(\varphi) = \frac{1}{2 \pi \; _2 F_1(\lambda,\lambda;1;r^2)}
 \frac{1}{(1+r^2-2r \cos(\varphi - \Phi))^\lambda}.
\end{equation}

This can be substantiated into the following
\begin{proposition}
MaxEnt distributions on the circle with fixed value of expected Busemann distance to the point $r e^{i \Phi}$ are given by densities \eqref{MaxEnt_Busemann_circle}.
\end{proposition}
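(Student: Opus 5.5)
The plan has three parts: show the Lagrangian stationary point is a global maximizer, evaluate the normalizing constant, and check that every admissible energy value is reached by some $\lambda$.

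\emph{Global maximality.} I would use the Gibbs inequality rather than the bare Euler--Lagrange computation above, since it gives maximality directly. Write
\[
B_w(\varphi)=\log\frac{1-r^2}{1+r^2-2r\cos(\varphi-\Phi)}
\]
for the Busemann function. The candidate density is $q_\lambda\propto e^{\lambda B_w}$, which is the family \eqref{MaxEnt_Busemann_circle} after absorbing the factor $(1-r^2)^\lambda$ into the constant. Let $p$ be any density satisfying \eqref{normalize} and \eqref{Busemann_constraint_circle} with the same constant $c$ as $q_\lambda$. Nonnegativity of the Kullback--Leibler divergence gives
\[
0\le D(p\|q_\lambda)=-H(p)-\int p\log q_\lambda .
\]
Since $\log q_\lambda=\lambda B_w-\log Z$, the integral $\int p\log q_\lambda=\lambda c-\log Z$ depends on $p$ only through the two constraints. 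Hence $H(p)\le H(q_\lambda)$, with equality iff $p=q_\lambda$. This argument holds for every real $\lambda$, including negative values.

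\emph{Normalizing constant.} Expand
\[
1+r^2-2r\cos\theta=|1-re^{i\theta}|^2 ,
\]
so that
\[
(1+r^2-2r\cos\theta)^{-\lambda}=\Big|\sum_n \frac{(\lambda)_n}{n!}r^ne^{in\theta}\Big|^2 .
\]
By Parseval,
\[
\frac1{2\pi}\int_0^{2\pi}(1+r^2-2r\cos\theta)^{-\lambda}\,d\theta=\sum_n\frac{(\lambda)_n^2}{(n!)^2}r^{2n}={}_2F_1(\lambda,\lambda;1;r^2).
\]
This is exactly the constant in \eqref{MaxEnt_Busemann_circle}.

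\emph{Attainability of every energy level.} This is the step I expect to be the main obstacle. The energy $E(\lambda)=\int q_\lambda B_w$ equals $\frac{d}{d\lambda}\log\int e^{\lambda B_w}$. Its derivative is $\mathrm{Var}_{q_\lambda}(B_w)>0$ for $r>0$, so $E$ is strictly increasing. As $\lambda\to\pm\infty$ the measure concentrates at the maximum or minimum of $B_w$, at $\varphi=\Phi$ or $\varphi=\Phi+\pi$. So $E$ maps $\mathbb R$ onto the open interval $\big(\log\frac{1-r}{1+r},\log\frac{1+r}{1-r}\big)$. That interval contains every constraint value achievable by a density other than a point mass.

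Combining the three parts: for each admissible constant in \eqref{Busemann_constraint_circle} there is a unique $\lambda$, and the unique maximizer is \eqref{MaxEnt_Busemann_circle}. The case $r=0$ is trivial, since then $B_w\equiv 0$ and the maximizer is the uniform distribution.
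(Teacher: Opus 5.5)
Your argument is correct, and it takes a different, more complete route than the paper. The paper writes the Euler--Lagrange condition for the Lagrangian with multipliers $\alpha,\lambda$. It exponentiates to obtain $p\propto\bigl(\frac{1-r^2}{1+r^2-2r\cos(\varphi-\Phi)}\bigr)^{\lambda}$, and states the normalizing constant $2\pi\,{}_2F_1(\lambda,\lambda;1;r^2)$ as the result of ``integrating''. That is a first-order necessary condition. It is how one finds the Gibbs form, but it does not show that the stationary point is a maximum, that it is unique, or that a prescribed energy level is actually realized by some $\lambda$.

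You instead take the Gibbs form as an ansatz and close each of these gaps:
\begin{itemize}
\item Nonnegativity of the Kullback--Leibler divergence certifies $q_\lambda$ as the unique global maximizer among all densities with the same two constraint values.
\item The identity $1+r^2-2r\cos\theta=|1-re^{i\theta}|^2$, with Parseval applied to $(1-re^{i\theta})^{-\lambda}=\sum_n\frac{(\lambda)_n}{n!}r^ne^{in\theta}$, actually computes the partition function, which the paper omits.
\item Convexity of $\log Z$ shows that $\lambda\mapsto E(\lambda)$ is a strictly increasing bijection of $\mathbb{R}$ onto $\bigl(\log\frac{1-r}{1+r},\log\frac{1+r}{1-r}\bigr)$. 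This is exactly the set of energy levels attainable by densities, so parametrizing the family by $\lambda$ is justified.
\end{itemize}

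The paper's route is shorter and serves as a uniform formal template for the later Poincar\'e and Bergman cases. Your first and third steps carry over to those spheres without change. The Parseval step also carries over on $\partial\mathbb{B}^m_{\mathbb{C}}$ if you replace Fourier orthogonality with the orthogonality of the homogeneous terms $(z^\dagger\xi)^n$ in $L^2(\sigma)$.
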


Hence, the resulting MaxEnt family \eqref{MaxEnt_Busemann_circle} has a very natural physical interpretation in terms of energy or information in the Poincar\' e disc. One might ask what is an appropriate name for this family of distributions. It turns out that they already have the name. This family has been introduced in \cite{JP} as a flexible model for circular data and is known in directional statistics as the Jones-Pewsey family.

\subsubsection{MaxEnt probability distributions on the circle on canonical Busemann energy levels}

An exceptional (canonical) sub-family of \eqref{MaxEnt_Busemann_circle} arises if the energy level is fixed at $const = - \log(1-r^2)$. This corresponds to the Lagrange multiplier $\lambda=1$. In this case hypergeometric series in \eqref{MaxEnt_Busemann_circle} have a particularly simple form:
$$
_2 F_1(1,1;1,r^2) = \frac{1}{1-r^2}.
$$ 
Substituting this into \eqref{MaxEnt_Busemann_circle} yields
\begin{equation}
\label{wrapped_Cauchy}
p_{wC}(\varphi) = \frac{1}{2 \pi} \frac{1-r^2}{1-2 r \cos(\varphi-\Phi) + r^2}.
\end{equation}
Densities \eqref{wrapped_Cauchy} are classical mathematical objects known as Poisson kernels. In directional statistics, they define the family of wrapped Cauchy distributions on the circle \cite{McCullagh}. These densities are obtained by "wrapping" Cauchy densities on the real line via Cayley transformation.

\begin{corollary}
The wrapped Cauchy distributions \eqref{wrapped_Cauchy} are MaxEnt family on the circle with the energy constraint \eqref{Busemann_constraint_circle} with $const = - \log(1-r^2)$. This energy level corresponds to unit temperature. 
\end{corollary}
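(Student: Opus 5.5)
The plan is to combine two facts: the variational characterization behind the preceding Proposition, and an explicit evaluation of the Busemann energy of the Poisson kernel. By the Proposition, every maximizer of \eqref{entropy} under \eqref{normalize} and \eqref{Busemann_constraint_circle} has the form \eqref{MaxEnt_Busemann_circle}. With $\lambda=1$, the identity ${}_2F_1(1,1;1;r^2)=\sum_n r^{2n}=(1-r^2)^{-1}$ turns this density into \eqref{wrapped_Cauchy}. It therefore remains to show two things: that this density satisfies the constraint with $const=-\log(1-r^2)$, and that it genuinely maximizes entropy rather than merely being stationary.

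\textbf{Step 1: the energy level.} Write $\zeta=e^{i\varphi}$, $w=re^{i\Phi}$ and $B_w(\zeta)=\log\frac{1-r^2}{|\zeta-w|^2}$. Then $\frac{1}{2\pi}\frac{1-r^2}{|\zeta-w|^2}$ is exactly the Poisson kernel of the disc at the point $w$. On the circle we have $|\zeta-w|=|1-\bar w\zeta|$. The function $h(z)=\log|1-\bar w z|^2=2\,\mathrm{Re}\log(1-\bar w z)$ is harmonic on a neighbourhood of the closed disc, since $|\bar w z|<1$ there. Hence the Poisson integral of $h|_{\mathbb S^1}$ evaluated at $w$ equals $h(w)=2\log(1-r^2)$. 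Therefore
$$\int_0^{2\pi}p_{wC}(\varphi)\,B_w(e^{i\varphi})\,d\varphi=\log(1-r^2)-2\log(1-r^2)=-\log(1-r^2),$$
which identifies the energy level claimed in the statement.

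\textbf{Step 2: global maximality.} Let $q$ be any density satisfying \eqref{normalize} and \eqref{Busemann_constraint_circle} with this constant. Gibbs' inequality gives $0\le \mathrm{KL}(q\|p_{wC})=-H(q)-\int q\log p_{wC}$. Since $\log p_{wC}=-\log 2\pi+B_w$, the integral $\int q\log p_{wC}$ depends on $q$ only through its normalization and its mean Busemann energy. These are the same for $q$ and $p_{wC}$, so $H(q)\le H(p_{wC})$, with equality only if $q=p_{wC}$. This shows that the Euler--Lagrange solution with $\lambda=1$ is the unique maximizer, not just a critical point.

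\textbf{Step 3: the temperature interpretation.} The Lagrange multiplier $\lambda$ plays the role of inverse temperature, so $\lambda=1$ corresponds to unit temperature. For the pairing of $\lambda=1$ with the level $-\log(1-r^2)$ to be unambiguous, I would also check that the map $\lambda\mapsto$ expected energy is strictly monotone. Its derivative is the variance of $B_w$ under the Gibbs density, which is positive for $r>0$, so distinct $\lambda$ give distinct levels.

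The only non-routine point, and so the main obstacle, is Step 1. The harmonicity trick evaluates the expected Busemann function without any explicit trigonometric integration. If that trick were unavailable, I would instead expand $\log|1-\bar w\zeta|^2$ in a Fourier series and integrate it against the Poisson kernel term by term.
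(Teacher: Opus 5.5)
Your proof is correct. Its first move, setting $\lambda=1$ in \eqref{MaxEnt_Busemann_circle} and using ${}_2F_1(1,1;1;r^2)=(1-r^2)^{-1}$, is the paper's entire argument. The differences are in how you justify the energy level and in the added rigor.

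The paper simply asserts that the level $-\log(1-r^2)$ corresponds to $\lambda=1$. Its systematic tool for such identifications, used later for the Poincar\'e and Bergman balls, is differentiating $\log Z(\lambda)$, which needs the $\lambda$-derivative of ${}_2F_1$ from the Appendix. You instead compute $\mathbb{E}[\beta]$ directly, from the reproducing property of the Poisson kernel applied to the harmonic function $\log|1-\bar w z|^2$. This is more elementary and fixes the sign unambiguously, consistent with $\mathbb{E}_{p_{wC}}[\beta]=\log 2\pi-H(p_{wC})\ge 0$.

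The sign matters here. The paper's later sections use $-\partial_\lambda\log Z$, whereas the $Z(\lambda)=\int e^{\lambda\beta}$ it actually computes requires $+\partial_\lambda\log Z$. They then list the $d=2$ level as $\log(1-\|w\|^2)$, which conflicts with this corollary. Your computation confirms that the corollary's $-\log(1-r^2)$ is the correct value.

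The price of your route is scope. The log-partition approach produces the whole curve $\lambda\mapsto\mathbb{E}_\lambda[\beta]$, while your trick is tied to the canonical multiplier. It does extend to the Bergman case $\lambda=m$: the kernel \eqref{Cauchy-Bergman} reproduces the pluriharmonic function $\log|1-z^\dagger\xi|^2$, again giving $-\log(1-\|z\|^2)$. It does not extend to the real Poincar\'e ball with $d>2$, where $\log\|y-w\|^2$ is not reproduced by \eqref{spherical_Cauchy}.

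Your Steps 2 and 3 fill genuine omissions in the paper. Its Euler--Lagrange computation only yields a stationary point and never shows that $\lambda=1$ is the only multiplier producing this level. In fact Step 2 makes your appeal to the Proposition unnecessary: once $p_{wC}$ satisfies the constraint, Gibbs' inequality alone proves it is the unique maximizer.
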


One of geometric harmonies of this specific case is that $w=r e^{i \Phi}$ is the conformal barycenter of the distribution \eqref{wrapped_Cauchy}.

\begin{remark}
For another, although far less prominent, sub-family of  \eqref{MaxEnt_Busemann_circle} we set $\lambda=-1$. This yields densities
$$
p_{cardio}(\varphi;\mu,\rho) = \frac{1}{2 \pi} (1 + 2 \rho \cos(\varphi-\mu))
$$ 
which are known in directional statistics as cardioid distributions \cite{MJ}, parametrized by $\mu \in [0,2 \pi)$ and $\rho \in [0,0.5]$.
\end{remark}

The above analysis can be extended to higher dimensions. However, at this point two directions should be distinguished. 

The first option is to consider the sphere $\mathbb{S}^{d-1}$ in the real vector space $\mathbb{R}^d$ and equip its interior ball with the Poincar\' e metric. The second option is to consider the sphere in the complex vector space $\mathbb{C}^m$ and equip the interior ball with the Bergman metric. Then we obtain the Bergman ball $\mathbb{B}_{\mathbb{C}}^m$ with the boundary sphere $\partial \mathbb{B}_{\mathbb{C}}^m$.

The two geometries are equivalent only in the case when $d=2m=2$, when both of them reduce to the Poincar\' e disc.

\subsection{MaxEnt distributions on spheres in real vector spaces with fixed mean Busemann function}

We start with the MaxEnt family on the spheres in real vector spaces with interior Poincar\' e balls. First, we introduce the manifold.

\subsubsection{Poincar\' e balls in real vector spaces and their boundary spheres}

Consider the unit ball in $d$-dimensional vector space:
$$
\mathbb{B}^d = \{ x = (x_1,\dots,x_d) \in \mathbb{R}^d \, : \, \| x \| < 1\} 
$$
where $\|x\| = \sqrt{x_1^2+\cdots+x_d^2}$ denotes the norm of the vector $x$.

Introduce the metric tensor at any point $x \in \mathbb{B}^d$ via the conformal scaling of standard Euclidean metric $g_E$:
\begin{equation}
\label{Poincare_metric}
g_x = \frac{4 g_E}{(1-\|x\|^2)^2}.
\end{equation}
Hence, the metric element $ds^2$ reads:
$$
ds^2 = \frac{4}{(1-\|x\|^2)^2} \sum_{i=1}^d dx_i^2.
$$
\begin{definition}
The Riemannian manifold $\mathbb{B}^d$ equipped with the metric \eqref{Poincare_metric} is called the $d$-dimensional Poincar\' e ball. 
\end{definition}

One can verify that the geodesic distance between any two points $u,v \in \mathbb{B}^d$ is
$$
d_{Poin}(u,v) = arcosh \left( 1 + 2 \frac{\|u-v\|^2}{(1-\|u\|^2)(1-\|v\|^2)} \right).
$$
Orientation-preserving isometry transformations of the Poincar\' e ball are given by
\begin{equation}
\label{Poincare_isometry}
h_b(x) =Q \frac{b \|x-b\|^2 + (1-\|b\|^2)(b-x)}{\rho(x,b)}
\end{equation}
where $Q$ is an orthogonal transformation of $\mathbb{R}^d$ and $\rho(\cdot,\cdot)$ is defined as
\begin{equation}
\label{rho}
\rho(x,b) = \|x-b\|^2 + (1-\|b\|^2)(1-\|x\|^2).
\end{equation}
The group of orientation-preserving isometries \eqref{Poincare_isometry} is isomorphic to the orthogonal Lorentz group $SO^+(d,1)$.

We point out the formula
$$
1 - \|h_b(x)\|^2 = \frac{(1-\|b\|^2)(1-\|x\|^2)}{\rho(x,b)},
$$
where $\rho(\cdot,\cdot)$ is defined by \eqref{rho}.

Jacobian of the mapping $w=h_b(x)$ is 
\begin{equation}
\label{Poincare_Jacobian}
J(x,w) = \frac{1-\|b\|^2}{\rho(x,b)^d} = \frac{(1-\|w\|^2)^d}{(1-\|x\|^2)^d}.
\end{equation}

Finally, denote by $\mathbb{S}^{d-1}$ the boundary sphere of the Poincar\' e ball:
$$
\mathbb{S}^{d-1} = \{ y \in \mathbb{R}^d \, : \, \|y\| = 1 \}.
$$

\subsubsection{The Busemann energy in Poincar\' e balls and MaxEnt distributions on their boundary spheres}

For $y \in \mathbb{S}^{d-1}$ and $w \in \mathbb{B}^d$ the Busemann function in the Poincar\' e ball reads:
\begin{equation}
\label{Busemann_Poincare}
\beta_y(w) = \log \frac{1-\|w\|^2}{\|y-w\|^2}, \quad \mbox{ where  } \;  \| y \| = 1, \, \|w\|<1.
\end{equation}
Impose the constraint on the expected Busemann distance to the reference point $w$:
\begin{equation}
\label{Busemann_constraint_Poincare}
\mathbb{E}[\beta_y(w)] \equiv \int_{\mathbb{S}^{d-1}} p(y) \log \frac{1-\|w\|^2}{\|y-w\|^2} d \sigma(y) = const.
\end{equation}

Consider the problem of maximizing entropy \eqref{entropy} under constraints \eqref{normalize} and \eqref{Busemann_constraint_Poincare}. 

The Euler-Lagrange condition reads
$$
\frac{\partial L}{\partial p}(p) = - \log p(y) - 1 - \alpha - \lambda \beta_y(w) = 0.
$$
By exponentiating the above equation we find that
$$
p(y) \propto e^{-\lambda \beta_y(w)}
$$
and substituting the Busemann energy \eqref{Busemann_Poincare}
$$
p(y) \propto \left( \frac{1-\|w\|^2}{\|y-w\|^2} \right)^\lambda.
$$
The normalizing constant is computed explicitly:
\begin{equation}
\label{normalizing_Poincare_sphere}
Z(\lambda,\|w\|^2) = \int_{\mathbb{S}^{d-1}} \left( \frac{1-\|w\|^2}{\|y-w\|^2} \right)^\lambda d \sigma(y) = (1-\|w\|^2)^\lambda \, _2F_1(\lambda,\lambda-\frac{d-2}{2};\frac{d}{2};\|w\|^2).
\end{equation}
We substantiate this derivation in the following
\begin{proposition}
MaxEnt probability distributions on spheres in real vector spaces with fixed mean Busemann distance \eqref{Busemann_Poincare} from the point $w \in \mathbb{B}^d$ are defined by the following densities
\begin{equation}
\label{MaxEnt_Busemann_Poincare}
p(y \, | \, \lambda,w) = \frac{1}{_2 F_1(\lambda,\lambda-\frac{d-2}{2};\frac{d}{2};\|w\|^2) \| y-w \|^{2 \lambda}}, \quad \| y \| =1.
\end{equation}
\end{proposition}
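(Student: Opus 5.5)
The plan has two parts: first identify the maximizer by a Lagrange multiplier argument that is upgraded to a genuine maximality statement via the Gibbs inequality, then evaluate the normalizing constant \eqref{normalizing_Poincare_sphere} by a radial reduction to a one-dimensional integral and a series expansion.

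\textbf{Step 1: maximality.} For a fixed real $\lambda$, set
$$
q(y)=e^{-\lambda\beta_y(w)}/Z,
$$
which is positive and continuous on $\mathbb{S}^{d-1}$ because $\|w\|<1$. Let $p$ be any density satisfying \eqref{normalize} and \eqref{Busemann_constraint_Poincare} with the same constant as $q$. Nonnegativity of the Kullback--Leibler divergence gives
$$
0\le \int p\log\frac{p}{q}\,d\sigma = -H(p)+\lambda\,\mathbb{E}_p[\beta_y(w)]+\log Z .
$$
The right-hand side depends on $p$ only through $H(p)$ and the fixed constraint value, and it vanishes when $p=q$. Hence $H(p)\le H(q)$, with equality exactly when $p=q$ almost everywhere. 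This turns the formal Euler--Lagrange computation preceding the statement into a rigorous maximality result. One should also note that as $\lambda$ ranges over $\mathbb{R}$, the expected energy $\mathbb{E}_q[\beta]$ is strictly monotone in $\lambda$, since its derivative is minus a variance. The family therefore covers the attainable constraint values in the interior of the admissible range.

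\textbf{Step 2: the normalizing constant.} Write
$$
Z=(1-\|w\|^2)^\lambda I, \qquad I=\int_{\mathbb{S}^{d-1}}\|y-w\|^{-2\lambda}\,d\sigma(y).
$$
By rotation invariance of $\sigma$, take $w=r e_1$ and set $t=y_1\in[-1,1]$. The pushforward of the normalized measure on $\mathbb{S}^{d-1}$ under $y\mapsto y_1$ is
$$
c_d(1-t^2)^{\frac{d-3}{2}}\,dt, \qquad c_d=\frac{\Gamma(d/2)}{\sqrt{\pi}\,\Gamma(\frac{d-1}{2})},
$$
and $\|y-w\|^2=1-2rt+r^2$. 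So
$$
I=c_d\int_{-1}^1 (1-2rt+r^2)^{-\lambda}(1-t^2)^{\frac{d-3}{2}}\,dt .
$$
This is a Gegenbauer generating-function integral. Expand
$$
(1-2rt+r^2)^{-\lambda}=\sum_n C_n^{\lambda}(t)\,r^n
$$
and integrate term by term against the weight $(1-t^2)^{\frac{d-3}{2}}$. Only even $n$ survive, and each surviving term is an integral of a polynomial against a Beta weight. An equivalent, more hands-on route is to write
$$
(1-2rt+r^2)^{-\lambda}=(1+r^2)^{-\lambda}\Bigl(1-\tfrac{2rt}{1+r^2}\Bigr)^{-\lambda},
$$
expand in powers of $t$, and use
$$
\int_{-1}^1 t^{2k}(1-t^2)^{\frac{d-3}{2}}\,dt=B\bigl(k+\tfrac12,\tfrac{d-1}{2}\bigr).
$$
Either route gives a closed form in $r^2$.

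\textbf{Step 3: matching the hypergeometric form (main obstacle).} After Step 2 the result typically appears as
$$
(1+r^2)^{-\lambda}\,{}_2F_1\Bigl(\tfrac{\lambda}{2},\tfrac{\lambda+1}{2};\tfrac d2;\tfrac{4r^2}{(1+r^2)^2}\Bigr).
$$
It must be converted to ${}_2F_1\bigl(\lambda,\lambda-\frac{d-2}{2};\frac d2;r^2\bigr)$, and this conversion is where I expect the difficulty. I would first try the quadratic transformation
$$
{}_2F_1\bigl(a,b;a-b+1;z\bigr)=(1+z)^{-a}\,{}_2F_1\Bigl(\tfrac a2,\tfrac{a+1}{2};a-b+1;\tfrac{4z}{(1+z)^2}\Bigr)
$$
with $a=\lambda$, $b=\lambda-\frac{d-2}{2}$, and $z=r^2$. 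Then $a-b+1=\frac d2$, which is exactly the required match.

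As a fallback, I would verify the identity directly. Both sides are real-analytic in $r^2$ and agree at $r=0$, where each equals $1$ because $\sigma$ is normalized. Comparing the coefficients of $r^{2k}$ on both sides then reduces to a Chu--Vandermonde-type summation.

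Finally, dividing $e^{-\lambda\beta_y(w)}$ by $Z$ cancels the factor $(1-\|w\|^2)^\lambda$. This yields the density \eqref{MaxEnt_Busemann_Poincare}.
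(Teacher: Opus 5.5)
Your proposal is correct and follows the same outline as the paper: first the Gibbs form of the maximizer, then its normalization. It supplies the two pieces the paper leaves informal. The paper only writes the Euler--Lagrange condition, which at best identifies a stationary point, and then asserts the value \eqref{normalizing_Poincare_sphere} of $Z$ with no derivation.

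Your Kullback--Leibler (Gibbs) inequality gives global maximality and a.e.\ uniqueness directly. This is legitimate because $\beta_y(w)$ is bounded on $\mathbb{S}^{d-1}$ for $\|w\|<1$. Your remark on which constraint values are attained is something the paper never addresses. Strict monotonicity needs $w\neq 0$, and covering the whole interval needs the limits $\lambda\to\pm\infty$, which you assert rather than prove.

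Your Steps 2--3 genuinely prove the normalizing constant, and the bookkeeping closes. Indeed $c_d\,B\bigl(k+\tfrac12,\tfrac{d-1}{2}\bigr)=(\tfrac12)_k/(\tfrac d2)_k$ and $(\lambda)_{2k}/(2k)!=(\tfrac\lambda2)_k(\tfrac{\lambda+1}{2})_k/\bigl(k!\,(\tfrac12)_k\bigr)$. So the series is exactly $(1+r^2)^{-\lambda}\,{}_2F_1\bigl(\tfrac\lambda2,\tfrac{\lambda+1}2;\tfrac d2;\tfrac{4r^2}{(1+r^2)^2}\bigr)$. The quadratic transformation you quote, with $a=\lambda$ and $b=\lambda-\tfrac{d-2}{2}$, converts it to the paper's form. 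The step you flag as the main obstacle is therefore already settled, and the fallback is unnecessary. The Gegenbauer variant is less convenient than the monomial expansion, since $C_n^\lambda$ is orthogonal for the weight $(1-t^2)^{\lambda-1/2}$ rather than $(1-t^2)^{(d-3)/2}$, except when $\lambda=\tfrac{d-2}{2}$.

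One sign slip needs repair. You set $q\propto e^{-\lambda\beta_y(w)}=\bigl(\|y-w\|^2/(1-\|w\|^2)\bigr)^{\lambda}$. But $Z=(1-\|w\|^2)^\lambda\int\|y-w\|^{-2\lambda}\,d\sigma$ is the normalizer of $e^{+\lambda\beta_y(w)}$. So your final sentence, that dividing $e^{-\lambda\beta_y(w)}$ by $Z$ cancels $(1-\|w\|^2)^\lambda$, is false as written.

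Take $q=e^{\lambda\beta_y(w)}/Z$ throughout instead. The Gibbs inequality then becomes $0\le -H(p)-\lambda\,\mathbb{E}_p[\beta_y(w)]+\log Z$, with the same conclusion, and $\tfrac{d}{d\lambda}\mathbb{E}_q[\beta_y(w)]$ becomes plus the variance. Since $\lambda$ ranges over $\mathbb{R}$, this is only a relabeling $\lambda\mapsto-\lambda$. The paper makes the same slip between its Euler--Lagrange equation and the next display.
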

This family is parametrized by $w \in {\mathbb B}^d$ and $\lambda \in \mathbb{R}$. The Lagrange multiplier $\lambda$ determines energy levels.

\begin{remark}
As a simple but important remark we notice that the family \eqref{MaxEnt_Busemann_Poincare} includes the uniform distribution on $\mathbb{S}^{d-1}$ both for $w=0$ and for $\lambda = 0$ (corresponding to the energy level $const=0$).

On the other extreme, delta distributions at points on $\mathbb{S}^{d-1}$ arise as limit cases when $\| w \| \to 1$.
\end{remark}

\begin{remark}
For another interesting limit case, let $\lambda \to \infty$ and $\|w\| \to 0$, such that $\lambda \| w \| \to \kappa$.
\footnote{As $\lambda$ is interpreted as inverse temperature, this corresponds to the limit of zero temperature. In this case, the Busemann function linearizes into a flat dot product.} 
One can show that in this case the densities \eqref{MaxEnt_Busemann_Poincare} converge to the von Mises-Fisher density \eqref{von Mises-Fisher} with the concentration parameter $\kappa$. As pointed out in Introduction, this is precisely the MaxEnt distribution with the fixed Euclidean centroid.
\end{remark}

\subsubsection{MaxEnt probability distributions on spheres in real vector spaces on canonical energy levels}

A canonical case arises when the Lagrange multiplier is equal to the dimension of the sphere: $\lambda = d-1$.
In this case, the hypergeometric function in the normalizing constant \eqref{normalizing_Poincare_sphere} has a simple evaluation:
\begin{equation}
\label{hyp_Poin_d-1}
_2 F_1(d-1,\frac{d}{2};\frac{d}{2};\|w\|^2) = (1-\|w\|^2)^{-(d-1)}.
\end{equation}
Substituting this into \eqref{MaxEnt_Busemann_Poincare} we obtain the following expression for densities:
\begin{equation}
\label{spherical_Cauchy}
p(y \, | \, w) = \left( \frac{1-\|w\|^2}{\|y-w\|^2} \right)^{d-1}.
\end{equation}
The family \eqref{spherical_Cauchy} has been studied in \cite{KMcC}. We will name it {\it spherical Cauchy} family and denote by ${\cal SC}(w)$. These distributions are extensions of wrapped Cauchy distributions \eqref{wrapped_Cauchy} to higher dimensions with analogous beautiful properties. In particular, ${\cal SC}(w)$ is invariant w. r. to isometries \eqref{Poincare_isometry} of the Poincar\' e ball. More precisely, if $X \sim {\cal SC}(w)$ and $g$ is an isometry, then $Y = g(X) \sim {\cal SC}(g(w))$.  

In order to determine the energy level corresponding to this family, we differentiate the log-partition:
\begin{equation}
\label{energy_level_log_partition}
const = \mathbb{E}[\beta_y(w)] = - \frac{\partial}{\partial \lambda} \log Z(\lambda,\|w\|^2) =
- \log(1-\|w\|^2) - \frac{\frac{\partial}{\partial \lambda} \, _2F_1(\lambda,\lambda-\frac{d-2}{2};\frac{d}{2};\|w\|^2)}{_2 F_1(\lambda,\lambda-\frac{d-2}{2};\frac{d}{2};\|w\|^2)}.
\end{equation}
For $\lambda=d-1$ the hypergeometric function in the nominator can be differentiated to get an explicit expression (see Appendix for the derivation):
$$
\frac{\partial}{\partial \lambda} \left. _2F_1(\lambda,\lambda-\frac{d-2}{2};\frac{d}{2};\|w\|^2) \right |_{\lambda = d-1} = - \frac{2}{(1-\|w\|^2)^{d-1}} \log(1-\|w\|^2) + \frac{1}{(1-\|w\|^2)^{d-1}} P_{d-3}(\|w\|^2), 
$$
where $P_{d-3}$ is the polynomial given by
$$
P_{d-3}(x) = \sum_{j=1}^{d-2} \frac{1}{j} \sum_{k=1}^j (-1)^k \binom{j}{k} x^k.
$$
As the denominator is given by \eqref{hyp_Poin_d-1}, we can substitute into \eqref{energy_level_log_partition} and cancel out the terms to get the final formula for the canonical energy level:
\begin{equation}
\label{energy_level_log_partition1}
const = \log(1-\|w\|^2) - P_{d-3}(\|w\|^2).
\end{equation}

We substantiate the above considerations to introduce the canonical sub-family of MaxEnt distributions.

\begin{proposition}
The MaxEnt family of probability distributions on $d-1$-dimensional sphere in real vector space for the Busemann energy level given by \eqref{energy_level_log_partition1} is the spherical Cauchy family defined by densities \eqref{spherical_Cauchy}.
The corresponding value of the inverse temperature is $\lambda = d-1$.
\end{proposition}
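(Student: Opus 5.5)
The plan is to combine the general MaxEnt result (the previous Proposition, giving densities \eqref{MaxEnt_Busemann_Poincare}) with an explicit evaluation at $\lambda=d-1$. We proceed in three steps.

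\textbf{Step 1: the MaxEnt form and the multiplier.} The Euler--Lagrange argument shows that any maximizer under \eqref{normalize} and \eqref{Busemann_constraint_Poincare} has the exponential-family form $p\propto e^{-\lambda\beta_y(w)}$. In this family the expected energy equals $-\partial_\lambda\log Z(\lambda,\|w\|^2)$, as in \eqref{energy_level_log_partition}. Since $\log Z$ is strictly convex in $\lambda$ (its second derivative is the variance of $\beta_y(w)$, which is positive for $w\neq0$), the map $\lambda\mapsto\mathbb{E}[\beta_y(w)]$ is strictly monotone. Hence each attainable energy level determines a unique $\lambda$, and it suffices to show that $\lambda=d-1$ produces the value \eqref{energy_level_log_partition1}.

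\textbf{Step 2: the density at $\lambda=d-1$.} Put $\lambda=d-1$ in \eqref{normalizing_Poincare_sphere}. Then $b=\lambda-\frac{d-2}{2}=\frac d2=c$, and the hypergeometric function reduces to ${}_2F_1(a,c;c;x)=(1-x)^{-a}$, which gives \eqref{hyp_Poin_d-1}. Therefore $Z=1$ and the density is exactly \eqref{spherical_Cauchy}. As an independent check, the Jacobian formula \eqref{Poincare_Jacobian} shows that \eqref{spherical_Cauchy} is the boundary Jacobian of the isometry $h_w$ restricted to $\mathbb{S}^{d-1}$. Its integral is then $1$ by change of variables, which confirms the normalization without using hypergeometric identities.

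\textbf{Step 3: the energy level (main obstacle).} The remaining work is to compute $\partial_\lambda\,{}_2F_1(\lambda,\lambda-\frac{d-2}{2};\frac d2;x)$ at $\lambda=d-1$.

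\begin{itemize}
\item Differentiate the series termwise. The $\lambda$-derivative of $(\lambda)_n(\lambda-\frac{d-2}{2})_n$ contributes the digamma differences $\psi(\lambda+n)-\psi(\lambda)$ and $\psi(\lambda-\frac{d-2}{2}+n)-\psi(\lambda-\frac{d-2}{2})$.
\item At $\lambda=d-1$ the second Pochhammer symbol cancels against $(d/2)_n$. What remains is
\[
\sum_n \frac{(d-1)_n}{n!}x^n\Bigl[\psi(d-1+n)-\psi(d-1)+\psi(\tfrac d2+n)-\psi(\tfrac d2)\Bigr].
\]
\item Rather than summing this directly, use the symmetry ${}_2F_1(a,b;c;x)={}_2F_1(b,a;c;x)$ together with the Euler transformation ${}_2F_1(a,b;c;x)=(1-x)^{c-a-b}{}_2F_1(c-a,c-b;c;x)$. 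With $c-a-b=d-1-2\lambda$, differentiating the prefactor gives the term $-2\log(1-x)(1-x)^{-(d-1)}$. The derivative of ${}_2F_1(\frac d2-\lambda,\,d-1-\lambda;\frac d2;x)$ at $\lambda=d-1$ has one parameter equal to $0$, so only the derivative through the vanishing Pochhammer factor survives.
\item That surviving term is a finite expression: it truncates because the other parameter $1-\frac d2$ makes the series terminate only for even $d$, so the odd-$d$ case will need care. Identifying it with $P_{d-3}(x)$ via $\frac1j\sum_k(-1)^k\binom jk x^k$ is exactly the Appendix computation, which we take as given.
\end{itemize}

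Finally, substitute into \eqref{energy_level_log_partition}. With $Z=1$ the ratio simplifies, the $\log(1-\|w\|^2)$ terms combine, and we obtain \eqref{energy_level_log_partition1}. Together with the uniqueness from Step 1, this identifies the MaxEnt family on that level as ${\cal SC}(w)$ with $\lambda=d-1$.
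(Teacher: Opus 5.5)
Your uniqueness argument in Step 1 (strict convexity of $\log Z$ in $\lambda$) is correct and is something the paper never supplies, and the normalization in Step 2 is fine. (The boundary Jacobian on $\mathbb{S}^{d-1}$ carries exponent $d-1$, not the exponent $d$ of \eqref{Poincare_Jacobian}.) Step 3, however, cannot be completed. Taking the Appendix as given is exactly where the argument breaks: the level \eqref{energy_level_log_partition1} is not the mean Busemann energy of ${\cal SC}(w)$. You inherit two independent errors.

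\textbf{Sign.} The $Z$ in \eqref{normalizing_Poincare_sphere} is $\int e^{+\lambda\beta_y(w)}\,d\sigma(y)$. So the density you get at $\lambda=d-1$ is $e^{+\lambda\beta_y(w)}/Z$, and its energy is $+\partial_\lambda\log Z$. You and \eqref{energy_level_log_partition} pair $p\propto e^{-\lambda\beta}$ and $-\partial_\lambda\log Z$ with this $Z$, which flips the sign.

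\textbf{Appendix formula.} The Appendix's closed form for $\partial_b g$ at $(a,b)=(d-1,\frac d2)$ is false for $d\ge 3$. The series $\sum_{n\ge1}\frac{(d-1)_n}{n!}\bigl[\psi(\frac d2+n)-\psi(\frac d2)\bigr]x^n$ has only positive coefficients, the first being $\frac{2(d-1)}{d}$. The claimed closed form has $x$-coefficient $3-d$.

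Carried out correctly, your own Euler-transformation route gives the true value and refutes the target. The vanishing parameter is $d-1-\lambda$, and $\frac{d}{db}(b)_n\big|_{b=0}=(n-1)!$. Hence the surviving term is $-\sum_{n\ge1}\frac{(1-\frac d2)_n}{(\frac d2)_n}\frac{x^n}{n}$, and with $x=\|w\|^2$
\[
\mathbb{E}_{{\cal SC}(w)}[\beta_y(w)]=\partial_\lambda\log Z\big|_{\lambda=d-1}=-\log(1-x)-\sum_{n\ge1}\frac{(1-\frac d2)_n}{(\frac d2)_n}\,\frac{x^n}{n}.
\]
This equals $-\log(1-x)$ for $d=2$ and $-\log(1-x)+\frac{x}{2}$ for $d=4$. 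For odd $d$ the series does not terminate. The ``care'' you flagged is therefore a signal that no identification with the degree-$(d-2)$ polynomial $P_{d-3}$ is possible. Already the first-order coefficients, $\frac{2(d-1)}{d}$ versus $d-3$, disagree for every $d$.

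A check free of hypergeometric identities, for $d=2$: $z\mapsto\log|1-\bar w z|^2$ is harmonic and equals $\log|y-w|^2$ on the circle. So under the Poisson kernel,
\[
\mathbb{E}[\beta_y(w)]=\log(1-\|w\|^2)-2\log(1-\|w\|^2)=-\log(1-\|w\|^2).
\]
This agrees with the paper's own circle corollary. By contrast, \eqref{energy_level_log_partition1} gives $\log(1-\|w\|^2)$, which is the mean energy of the \emph{uniform} distribution. Hence for $w\neq 0$ the MaxEnt density at that level is the uniform one ($\lambda=0$), not the wrapped Cauchy.

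So the proposition as stated is false, and no argument can close Step 3. What your Steps 1--3 do establish, once the sign and the surviving term are handled correctly, is the corrected statement: ${\cal SC}(w)$, with $\lambda=d-1$, is the MaxEnt family at the level displayed above.
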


We write down energy levels for several dimensions:
\begin{itemize}

\item{The disc ($d=2$). In this case $d-2=0$ and the sum for $P_{d-3}$ is empty. Hence, it equals zero and the canonical energy level is simply the logarithm: $const = \log(1-\|w\|^2)$, as already derived for wrapped Cauchy on the circle.}

\item{Three-dimensional ball. For $d=3$ the sum for $P_{d-3}$ contains only one term and the canonical energy level is $const = \log(1-\|w\|^2) + \|w\|^2$.} 

\item{$d=4$. In this case we have two additional terms and $const = \log(1-\|w\|^2) + 2 \|w\|^2 + \|w\|^4/2$.} 

\end{itemize}

The above formulae illustrate how canonical energy levels depend on the dimension. 
 
\begin{remark}
Fisher information for the family \eqref{MaxEnt_Busemann_Poincare} is given by
$$
F_{sC}(w) = \frac{(d-1)^2}{d} \frac{4}{(1-\|w\|^2)} I.
$$
We refer to \cite{KMcC,Jacimovic1} for the proof. 

This expression for the Fisher information shows that the information manifold defined by densities ${\cal SC}(w)$ is isomorphic to the Poincar\' e ball up to the dimension-dependent multiplier $(d-1)^2/d$.
\end{remark}

\subsection{MaxEnt distributions on spheres in complex vector spaces with fixed Busemann energy levels}

We proceed with an alternative geometric framework, which is conveniently described in complex vector notations. We will introduce the Bergman ball and its boundary sphere and deal with these manifolds in the remainder of the paper.

We consider the complex vector space $\mathbb{C}^m$. The notation $\bar a$ stands the conjugate of the complex number $a$, while $\xi^\dagger$ denotes the conjugate transpose of the vector $\xi$. Hence, the expression $\xi^\dagger z$ is the standard Hermitian scalar product of vectors $\xi,z \in \mathbb{C}^m$ and $\|z\| = \sqrt{z^\dagger z}$ denotes the norm of the vector $z$. Accordingly, $z z^\dagger$ is the rank one matrix.

\subsubsection{Bergman balls, their isometry groups and boundary spheres}

Consider the unit ball in $\mathbb{C}^m$:
$$
\mathbb{B}_{\mathbb{C}}^m = \{ z \; : \; \|z\| <1\}.
$$


The Bergman kernel in $\mathbb{B}_{\mathbb{C}}^m$ is \cite{Zhu}
\begin{equation}
\label{Bergman_kernel}
K(z,w)=\frac{1}{m+1}\frac{1}{(1-\left<z,w\right>)^{m+1}}.
\end{equation}
Equip $\mathbb{B}_{\mathbb{C}}^m$ with the metric tensor $
g_z(u,v) = (B(z)u)^\dagger v$ for $u, v\in \mathbb{C}^m$, $z\in \mathbb{B}^m_{\mathbb{C}}.$
Here $$B(z)=(b(z)_{ij})_{i,j=1}^m \quad \mbox{ and } \quad b(z)_{ij}= \frac{1}{m+1}\frac{\partial^2}{\partial \overline{{z_i}}\partial z_j}K(z,z).$$ 

\begin{definition}
The Riemannian manifold $\mathbb{B}^m_{\mathbb{C}}$ with the metric tensor $g$ is named the Bergman  ball.
\end{definition}
Bergman balls have constant negative sectional curvature, see \cite{HL}.

Let $P_a$ be the
orthogonal projection of $\mathbb{C}^m$ onto the subspace $[a]$ generated by $a$, and let $$Q=Q_a =
I - P_a$$ be the projection onto the orthogonal complement of $[a]$. Explicitly, $P_0 = 0$ and  $P=P_a(z) =\frac{(z^\dagger a) a}{\|a \|^2}$. Set $s_a = (1 - \|a\|^2)^{1/2}$ and consider the map
\begin{equation}
\label{Bergman_transf}
m_a(z) =\frac{a-P_a z-s_a Q_a z}{1-z^\dagger a}.
\end{equation}
Compositions of mappings of the form \eqref{Bergman_transf} and unitary linear mappings in $\mathbb{C}^m$ consitute the group of holomorphic automorphisms of the unit ball $\mathbb{B}^m_{\mathbb{C}}$. It is easy to verify that $m_a^{-1}=m_a$. Moreover, for any automorphism $q$ of the Bergman ball onto itself there exists a unitary transformation $U$ such that 
\begin{equation}
\label{automob} 
m_{q(c)}\circ q=U\circ m_c.
\end{equation}

By using the representation formula \cite[Proposition~1.21]{Zhu}, the Bergman metric can be introduced by the following expression
\begin{equation}
\label{bergmet}
d_{Bergman}(z,w)=\frac{1}{2}\log\frac{1+\|m_w(z)\|}{1-\|m_w(z)\|}.
\end{equation}

It is well-known that every automorphism $q$ of the unit ball is an isometry w.r. to the Bergman metric, that is: $d_{Bergman}(z,w)=d_{Bergman}(q(z),q(w))$. The group of orientation-preserving automorphisms of the Bergman ball is isomorphic to the upper sheet $SU^+(m,1)$ of the unitary Lorentz group.

We also point out the formula 
\begin{equation}
\label{phia}
1-\|m_a(z)\|^2 = \frac{(1-\|z\|^2)(1-\|a\|^2)}{|1- a^\dagger z|^2}\end{equation} 
and the expression for the Jacobian
$$
J(z,m_a)=\left(\frac{1-\|m_a(z)\|^2}{1-\|z\|^2}\right)^{m+1}=\left(\frac{1-\|a\|^2}{|1- z^\dagger a|^2}\right)^{m+1}.
$$

Finally, we will use the notation $\partial \mathbb{B}^m_{\mathbb C}$ for the boundary sphere of the Bergman ball.

\subsubsection{The Busemann energy in Bergman balls and MaxEnt distributions on their boundary spheres}

The Busemann function in the Bergman ball is introduced as follows
\begin{equation}
\label{Busemann_Bergman}
\beta_\xi(z) = \log \frac{1-\|z\|^2}{|1-z^\dagger \xi|^2}, \quad \| \xi \| = 1 \; \| z \| < 1.
\end{equation}
Accordingly, the constraint on the mean Busemann energy reads
\begin{equation}
\label{Busemann_constraint_Bergman}
\mathbb{E}[\beta_\xi(z)] \equiv \int_{\partial \mathbb{B}^m_{\mathbb C}} \log \frac{1-\|z\|^2}{|1-z^\dagger \xi|^2} p(\xi) d \sigma(\xi) = const.
\end{equation}

Now, we solve the problem of maximizing \eqref{entropy} subject to constraints \eqref{normalize} and \eqref{Busemann_constraint_Bergman} where all integrals are evaluated over the sphere $\partial \mathbb{B}_{\mathbb C}^m$.

Apply the Euler-Lagrange condition
$$
\frac{\partial {\cal L}}{\partial p} = 0 \implies p(\xi) \propto e^{-\lambda \beta_\xi(z)}.
$$
Plugging the Busemann function \eqref{Busemann_Bergman} the density is further rewritten as
$$
p(\xi \, | \, \lambda,z) \propto \left( \frac{1-\|z\|^2}{|1-z^\dagger \xi|^2} \right)^\lambda.
$$
The normalizing constant can be computed by evaluating the spherical integral:
\begin{equation}
\label{normalizing_constant_Bergman}
Z(\lambda,\|z\|^2) = \int_{\partial \mathbb{B}^m_{\mathbb C}} \left( \frac{1-\|z\|^2}{|1-z^\dagger \xi|^2} \right)^\lambda d \sigma(\xi) = (1-\|z\|^2)^\lambda \, _2 F_1(\lambda,\lambda;m;\|z\|^2).
\end{equation}
Putting all together we assert the following
\begin{proposition}
MaxEent probability distributions on the sphere $\partial \mathbb{B}_{\mathbb C}^m$ with the fixed mean Busemann distance from the reference point $z$ in the Bergman ball ${\mathbb B}_\mathbb{C}^m$ are defined by the following densities
\begin{equation}
\label{MaxEnt_Busemann_constraint_Bergman}
p(\xi \, | \, \lambda,z) = \frac{1}{_2 F_1(\lambda,\lambda;m;\|z\|^2) |1-z^\dagger \xi|^{2 \lambda}}.
\end{equation}
\end{proposition}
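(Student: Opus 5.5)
The argument has two parts: a variational part, which identifies the exponential form of the maximizer, and a computational part, which evaluates the normalizing constant \eqref{normalizing_constant_Bergman}.

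\emph{Variational part.} Form the Lagrangian
$$
{\cal L}(p)=H(p)-\alpha\Big(\int p\,d\sigma-1\Big)-\lambda\Big(\int \beta_\xi(z)p\,d\sigma-const\Big).
$$
The pointwise Euler--Lagrange condition gives $p(\xi)\propto e^{-\lambda\beta_\xi(z)}$, which is $\big((1-\|z\|^2)/|1-z^\dagger\xi|^2\big)^\lambda$ up to a constant. To show this is actually the maximizer rather than just a critical point, use the Gibbs inequality. For any admissible density $q$ with the same two constraint values, set $p_\lambda=e^{-\lambda\beta}/Z$. Then
$$
H(q)=-\mathrm{KL}(q\,\|\,p_\lambda)+\lambda\,\mathbb{E}_q[\beta]+\log Z\le H(p_\lambda),
$$
with equality only if $q=p_\lambda$. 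The factor $(1-\|z\|^2)^\lambda$ does not depend on $\xi$, so it can be absorbed into the constant, leaving the density proportional to $|1-z^\dagger\xi|^{-2\lambda}$. I would also note that $\lambda\mapsto\mathbb{E}_{p_\lambda}[\beta]=-\partial_\lambda\log Z$ is monotone, since its derivative is minus the variance of $\beta$. Consequently each attainable energy level corresponds to exactly one $\lambda$.

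\emph{Computational part.} By unitary invariance of $\sigma$, take $z=r e_1$, so that $z^\dagger\xi=r\xi_1$. Expand
$$
(1-r\xi_1)^{-\lambda}=\sum_k \frac{(\lambda)_k}{k!}r^k\xi_1^k,
$$
and similarly for the conjugate factor. Multiplying the two series and integrating over $\partial\mathbb{B}^m_{\mathbb C}$, orthogonality of monomials kills every cross term $\xi_1^k\bar\xi_1^l$ with $k\ne l$, so only the diagonal terms survive. For these I would use the standard formula for the normalized surface measure (Rudin, or \cite{Zhu}):
$$
\int|\xi_1|^{2k}d\sigma=\frac{k!\,(m-1)!}{(m-1+k)!}=\frac{k!}{(m)_k}.
$$
Combining this with the coefficients $(\lambda)_k^2/(k!)^2$ from the two expansions gives
$$
\int|1-r\xi_1|^{-2\lambda}d\sigma=\sum_k\frac{(\lambda)_k(\lambda)_k}{(m)_k\,k!}\,r^{2k}={}_2F_1(\lambda,\lambda;m;r^2).
$$
Multiplying by $(1-\|z\|^2)^\lambda$ yields \eqref{normalizing_constant_Bergman}. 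Dividing the unnormalized density by this value, the $(1-\|z\|^2)^\lambda$ factors cancel, which gives exactly \eqref{MaxEnt_Busemann_constraint_Bergman}.

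The main obstacle is the spherical moment formula for $|\xi_1|^{2k}$ and the rigorous justification of the orthogonality step. The expansions converge uniformly on the sphere because $r<1$, so exchanging sum and integral is legitimate. The moment itself I would derive by the Gaussian trick: integrate $|w_1|^{2k}e^{-\|w\|^2}$ over $\mathbb{C}^m$ in polar coordinates and compare with the product of one-dimensional Gaussian integrals.
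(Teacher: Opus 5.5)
Your proposal is correct and follows the paper's route: the Euler--Lagrange condition gives $p(\xi)\propto e^{-\lambda\beta_\xi(z)}$, the $\xi$-independent factor $(1-\|z\|^2)^\lambda$ cancels on normalization, and the normalizing constant is $(1-\|z\|^2)^\lambda\,{}_2F_1(\lambda,\lambda;m;\|z\|^2)$. The paper only asserts the value of this spherical integral and does not argue global optimality, so your power-series computation with the moments $\int|\xi_1|^{2k}\,d\sigma=k!/(m)_k$ and your Gibbs-inequality argument fill in steps the paper leaves implicit.
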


\subsubsection{MaxEnt probability distributions on spheres in complex vector spaces on canonical energy levels}

The exceptional case arises when the Lagrange multiplier is equal to the complex dimension, that is $\lambda = m$. In this case the hypergeometric function in \eqref{MaxEnt_Busemann_constraint_Bergman} has the following simple form:
\begin{equation}
\label{hyp_geom_eval}
_2 F_1(m,m;m;\|z\|^2) = (1-\|z\|^2)^{-m}.
\end{equation}
Substituting this into \eqref{MaxEnt_Busemann_constraint_Bergman} we find that
\begin{equation}
\label{Cauchy-Bergman}
p(\xi \, | \, z) = \left( \frac{1-\|z\|^2}{|1-z^\dagger \xi|^2} \right)^m.
\end{equation}
The densities \eqref{Cauchy-Bergman} define a sub-family of probability distributions on spheres in complex vector spaces, parametrized by points $z$ in the Bergman balls. We name this family the Cauchy-Bergman family and denote by ${\cal CB}(z)$.

The family ${\cal CB}(z)$ is invariant w.r. to automorhisms \eqref{automob} of the Bergman ball. More precisely, if $\Xi \sim  {\cal CB}(z)$ and $h$ is a biloholomorphic automorphism of $\mathbb{B}_{\mathbb{C}}^m$, then $h(\Xi) \sim {\cal CB}(h(z))$. 


We further investigate the energy level for this canonical family. Recall that the normalizing constant is given by \eqref{normalizing_constant_Bergman} and differentiate
\begin{equation}
\label{energy_level_Bergman}
const = - \frac{\partial}{\partial \lambda} \left. \log Z(\lambda,\|z\|^2) \right|_{\lambda=m} = - \log(1-\|z\|^2) - \frac{\frac{\partial}{\partial \lambda} \left. _2 F_1(\lambda,\lambda;m;\|z\|^2)\right|_{\lambda=m}}{_2 F_1(m,m;m;\|z\|^2)}.
\end{equation}
We can easily evaluate
$$
\frac{\partial}{\partial \lambda} \left. _2 F_1(\lambda,\lambda;m;\|z\|^2) \right|_{\lambda=m} = -2 (1-\|z\|^2)^{-m} \log (1-\|z\|^2).
$$
Substituting this equation and \eqref{hyp_geom_eval} into \eqref{energy_level_Bergman} we find that
$$
const = \mathbb{E} [\beta_\xi(z)] = - \log(1 - \|z\|^2) + 2 \log(1-\|z\|^2) = \log(1-\|z\|^2).
$$
This unveils the remarkable fact that the canonical energy is precisely the negative of the K\" ahler potential in the Bergman ball. Notice that it does not depend on the dimension $m$.

\begin{proposition}
MaxEnt distributions on the sphere in complex vector space with the fixed mean Busemann distance equal to the negative K\" ahler potential $\log(1-\|z\|^2)$ are defined by densities \eqref{Cauchy-Bergman}.
The corresponding value of the inverse temperature is equal to the complex dimension $m$.
\end{proposition}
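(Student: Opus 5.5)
The argument follows the same route as the derivation preceding the statement. I split it into a variational step and a computational step.

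\emph{Variational step.} Maximizing \eqref{entropy} subject to \eqref{normalize} and \eqref{Busemann_constraint_Bergman} gives, via the Euler--Lagrange condition, the exponential family $p(\xi\,|\,\lambda,z)\propto e^{-\lambda\beta_\xi(z)}$. Its normalizing constant is $Z(\lambda,\|z\|^2)$ from \eqref{normalizing_constant_Bergman}. Since the problem is a concave maximization over a convex set, a stationary density that satisfies both constraints is the unique maximizer. Equivalently, Gibbs' inequality shows that $H(q)\le H(p)$ for every $q$ with the same normalization and mean energy. So it suffices to exhibit a $\lambda$ for which the member of this family has mean Busemann energy exactly $\log(1-\|z\|^2)$, and to show that this member is \eqref{Cauchy-Bergman}.

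\emph{Computing the normalizer.} I would verify \eqref{normalizing_constant_Bergman} independently. Expand $|1-z^\dagger\xi|^{-2\lambda}=(1-z^\dagger\xi)^{-\lambda}(1-\xi^\dagger z)^{-\lambda}$ in binomial series. Then use orthogonality of the monomials $(z^\dagger\xi)^k$ on the sphere, which gives $\int_{\partial\mathbb{B}^m_{\mathbb C}}|z^\dagger\xi|^{2k}\,d\sigma=\frac{k!\,(m-1)!}{(m+k-1)!}\|z\|^{2k}$. This yields $\sum_k\frac{(\lambda)_k^2}{(m)_k k!}\|z\|^{2k}={}_2F_1(\lambda,\lambda;m;\|z\|^2)$. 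At $\lambda=m$ this reduces to $(1-\|z\|^2)^{-m}$, which gives \eqref{hyp_geom_eval} and hence the density \eqref{Cauchy-Bergman}.

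\emph{Energy identity.} Next I compute the mean energy $-\partial_\lambda\log Z$ at $\lambda=m$. This is the main obstacle, because $\partial_\lambda\,{}_2F_1(\lambda,\lambda;m;x)$ is not elementary for general $\lambda$. I would first try Euler's transformation ${}_2F_1(\lambda,\lambda;m;x)=(1-x)^{m-2\lambda}\,{}_2F_1(m-\lambda,m-\lambda;m;x)$. The second factor equals $1+O((\lambda-m)^2)$, since every coefficient contains $(m-\lambda)_k^2$, so its $\lambda$-derivative vanishes at $\lambda=m$. Differentiating then gives $\partial_\lambda\,{}_2F_1|_{\lambda=m}=-2(1-x)^{-m}\log(1-x)$. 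Substituting this into \eqref{energy_level_Bergman} gives $\mathbb{E}[\beta_\xi(z)]=-\log(1-\|z\|^2)+2\log(1-\|z\|^2)=\log(1-\|z\|^2)$.

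\emph{Identifying the multiplier.} Finally, I would note that $\lambda\mapsto-\partial_\lambda\log Z$ is strictly decreasing, because its derivative is minus the variance of $\beta_\xi(z)$, which is positive for $z\neq0$. Hence the energy level $\log(1-\|z\|^2)$ corresponds to the single multiplier $\lambda=m$, and the MaxEnt distribution for this energy level is precisely ${\cal CB}(z)$. For $z=0$ both sides reduce trivially to the uniform distribution.
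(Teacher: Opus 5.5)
Your route is the paper's own: the Euler--Lagrange exponential family, the hypergeometric normalizer, and the $\lambda$-derivative of $\log Z$ at $\lambda=m$. You replace the paper's parameter-symmetry trick with Euler's transformation and add useful Gibbs and monotonicity arguments. However, the energy step has a sign error, inherited from \eqref{energy_level_Bergman}, and it is fatal. You write $p\propto e^{-\lambda\beta_\xi(z)}$, but the normalizer you verify is $Z(\lambda)=\int e^{+\lambda\beta_\xi(z)}\,d\sigma(\xi)$, and \eqref{Cauchy-Bergman} equals $e^{+m\beta_\xi(z)}$. For $p_\lambda=e^{\lambda\beta}/Z(\lambda)$ one has $\mathbb{E}_{p_\lambda}[\beta]=+\partial_\lambda\log Z$. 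Combining this with your (correct) value $\partial_\lambda\,{}_2F_1(\lambda,\lambda;m;x)|_{\lambda=m}=-2(1-x)^{-m}\log(1-x)$ gives
\[
\mathbb{E}_{{\cal CB}(z)}[\beta_\xi(z)]=\log(1-\|z\|^2)-2\log(1-\|z\|^2)=-\log(1-\|z\|^2).
\]
A one-line check confirms the sign. For $p$ as in \eqref{Cauchy-Bergman}, $\beta_\xi(z)=\frac1m\log p(\xi\,|\,z)$, so $\mathbb{E}_p[\beta]=\frac1m\int p\log p\,d\sigma\ge 0$, being a Kullback--Leibler divergence from the normalized $\sigma$. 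By contrast, $\log(1-\|z\|^2)<0$ for $z\neq0$.

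With $\beta$ as defined in \eqref{Busemann_Bergman}, the statement is in fact false for $z\neq 0$. The function $\xi\mapsto\log|1-z^\dagger\xi|^2$ is pluriharmonic near the closed ball and vanishes at the origin. By the mean-value property, $\int\beta_\xi(z)\,d\sigma(\xi)=\log(1-\|z\|^2)$. So the uniform law ($\lambda=0$) already satisfies the stated constraint, and as the global entropy maximizer it is the MaxEnt solution at that level.

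Your monotonicity step, run with the correct sign, separates the two levels. The map $\lambda\mapsto\partial_\lambda\log Z$ is strictly increasing, its derivative being $\mathrm{Var}(\beta)$. It assigns the level $\log(1-\|z\|^2)$ to $\lambda=0$ and the level $-\log(1-\|z\|^2)$ to $\lambda=m$. The paper's circle corollary has the correct sign, $-\log(1-r^2)$; the slip enters at \eqref{energy_level_Bergman}.

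There are two equivalent repairs. One is to fix the level at $-\log(1-\|z\|^2)$, the K\"ahler potential itself, with $p\propto e^{m\beta}$. The other is to replace \eqref{Busemann_Bergman} by its negative, the usual convention under which the Busemann function tends to $-\infty$ at $\xi$. Then $p\propto e^{-\lambda\beta}$ and $\mathbb{E}[\beta]=-\partial_\lambda\log Z$ are both correct, and the stated level $\log(1-\|z\|^2)$ holds with $\lambda=m$. Once the sign is made consistent either way, the rest of your argument goes through unchanged.
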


\begin{remark}
The Fisher information matrix for the family ${\cal CB}(z)$ is given by
$$
F_{CB}(z) = \frac{m}{(1-\|z\|^2)^2} [(1-\|z\|^2) I + z z^\dagger].
$$
For the proof we refer to \cite{Jacimovic1}. 

We immediately notice that the above expression defines the Bergman metric. Hence, the information manifold ${\cal BC}(z)$ is isomorphic to the Bergman ball. 
\end{remark} 
 
\begin{remark}
Notice that the Cauchy-Bergman sub-family includes the uniform distribution for the reference point $z=0$. Delta distributions appear as limits when $\|z\| \to 1$.
\end{remark}

\section{Geometry and temperature on the Cauchy-Bergman statistical manifold}

Our next goal is to demonstrate potential relevance of the MaxEnt families derived in the previous Section for modeling. The exposition at this point may diverge, as we have two manifolds with the underlying conformal (Poincar\' e balls) or holomorphic geometry (Bergman balls). Both geometries are rich enough to underlie powerful models, but the difference between them is substantial. One apparent contrast is that the Bergman metric is anisotropic, while the Poincar\' e's is isotropic. This difference manifests in many computational and modeling aspects. In order to keep the exposition reasonably concise in the remainder of this study we focus on Bergman balls and the Cauchy-Bergman family on the sphere $\partial \mathbb{B}_{\mathbb{C}}^m$. 

This Section is divided into three subsections. In the first subsection we introduce the Busemann free energy as the crucial functional for modeling cognition. As expected, the family \eqref{MaxEnt_Busemann_constraint_Bergman} is recovered as Boltzmann-Gibbs distributions for this free energy functional. In the second subsection we focus on the Cauchy-Bergman sub-family and introduce the reproducing kernel as a measure of similarity between two C-B distributions. In the third subsection we introduce the RKHS associated with this reproducing kernel.

\subsection{Busemann variational free energy in the Bergman ball}

As pointed out in Introduction, the Busemann function acts a directional potential field in the ball. Hence, we introduce the Busemann variational free energy in the following way
\begin{equation}
\label{free_energy}
{\cal F}(p) = \int_{\partial \mathbb{B}_\mathbb{C}^m} \beta_\xi(z) p(\xi) d \sigma(\xi) - \frac{1}{\lambda} H(p),
\end{equation}
where $H(p)$ is the Shannon entropy \eqref{entropy} and $\beta_\xi(\cdot)$ is the Busemann function \eqref{Busemann_Bergman}.

The Euler-Lagrange condition for this functional yields
$$
\frac{\partial L}{\partial p} = \beta_\xi(z) + \frac{1}{\lambda} (\log p(\xi) + 1) + \mu = 0.
$$
Exponentiating this equation and normalizing the probability density we easily recover densities \eqref{MaxEnt_Busemann_constraint_Bergman} as minimizers of \eqref{free_energy}. 
Plugging these densities into \eqref{free_energy} we find the minimal value of the free energy:
$$
{\cal F}_{min} = - \frac{1}{\lambda} \log Z(\lambda,\|z\|^2).
$$
Hence, in line with the general theory of statistical-mechanical equilibrium, the equilibrium free energy in the Bergman ball is determined by the partition function of \eqref{MaxEnt_Busemann_constraint_Bergman}.

The perfect equilibrium state harmony is achieved for the case when $\lambda = m$: 
$$
Z(m,\|z\|^2) = (1-\|z\|^2)^{-m}.
$$
 Hence, in this case the equilibrium energy level is 
 $$
 {\cal F}_{min} = \log(1-\|z\|^2) = - K(z,\bar z),
 $$
  where $K(\cdot,\cdot)$ denotes the standard Bergman kernel \eqref{Bergman_kernel}.
  
This unveils a remarkable insight.

\begin{corollary}
At the canonical scale when the inverse temperature equals the complex dimension, the entire geometry of the Bergman ball arises as a consequence of the minimization of the free energy.
\end{corollary}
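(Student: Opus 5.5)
The statement is informal, so the plan is to make it precise and verify each part from what has already been derived. By ``the entire geometry of the Bergman ball'' I mean three things: the K\"ahler potential, the Bergman metric tensor $B(z)$, and the isometry group $SU(m,1)$. The claim is then that all three are recovered from the equilibrium free energy ${\cal F}_{min}(z)$ at $\lambda=m$, viewed as a function of the reference point $z$.

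\textbf{Step 1: the K\"ahler potential.} The minimizer of \eqref{free_energy} is \eqref{MaxEnt_Busemann_constraint_Bergman}, and ${\cal F}_{min}=-\frac{1}{\lambda}\log Z(\lambda,\|z\|^2)$. At $\lambda=m$, the evaluation \eqref{hyp_geom_eval} together with \eqref{normalizing_constant_Bergman} gives $Z(m,\|z\|^2)$ explicitly. Hence $-m\,{\cal F}_{min}(z)=\log Z(m,\|z\|^2)$, and this equals $-m\log(1-\|z\|^2)=\log\big((m+1)K(z,z)\big)$, which is, up to an additive constant, the logarithm of the Bergman kernel \eqref{Bergman_kernel} on the diagonal. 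So the equilibrium free energy is the K\"ahler potential of the Bergman ball, up to the factor $-m$ and an additive constant. I will fix this normalization carefully, since the sign convention in the text is loose.

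\textbf{Step 2: the metric.} Take $\partial^2/\partial\bar z_i\partial z_j$ of $-m\,{\cal F}_{min}=-m\log(1-\|z\|^2)$. This gives
\[
\frac{m}{(1-\|z\|^2)^2}\big[(1-\|z\|^2)\delta_{ij}+z_j\bar z_i\big],
\]
which is proportional to $B(z)$ as defined from $K$. It also coincides with the Fisher matrix $F_{CB}(z)$ stated in the preceding Remark. This is consistent with the general exponential-family fact that the Hessian of the log-partition function is the Fisher information. Here the Hessian is taken in the complex sense, and I expect the cross-check against \cite{Jacimovic1} to handle any factor discrepancies.

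\textbf{Step 3: the isometries.} Use the equivariance of ${\cal CB}(z)$: $h(\Xi)\sim{\cal CB}(h(z))$. Entropy changes under the pushforward by the log-Jacobian of $h$ on the sphere. The Busemann energy transforms by the cocycle identity $\beta_{h(\xi)}(h(z))=\beta_\xi(z)+{}$(a term depending only on $\xi$). Combining these, ${\cal F}$ is invariant under automorphisms up to terms that shift ${\cal F}_{min}$ by a pluriharmonic function, namely $\log|1-z^\dagger a|^2$ by \eqref{phia}. Such a term is annihilated by $\partial\bar\partial$, so the metric from Step 2 is $SU(m,1)$-invariant. The geodesic distance \eqref{bergmet} then follows from the metric.

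\textbf{Main obstacle.} Step 3 is where the argument is delicate. Verifying the transformation law of the free energy, and identifying the extra term as pluriharmonic, needs the explicit sphere Jacobian. I would obtain that Jacobian as the boundary limit of $J(z,m_a)$. Steps 1 and 2 are direct computation from earlier results.
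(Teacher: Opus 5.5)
\textbf{There is a genuine gap in Steps 1 and 2.} Your Step 1 is exactly the paper's own argument for this corollary, and it inherits the paper's slip.

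Combining \eqref{normalizing_constant_Bergman} with \eqref{hyp_geom_eval} gives $Z(m,\|z\|^2)=(1-\|z\|^2)^{m}(1-\|z\|^2)^{-m}=1$, not $(1-\|z\|^2)^{-m}$. This is forced: \eqref{Cauchy-Bergman} carries no normalizing constant, because it is the invariant Poisson kernel of the ball and has total mass one. The value $(1-\|z\|^2)^{-m}$ is only the hypergeometric factor ${}_2F_1(m,m;m;\|z\|^2)$. Hence ${\cal F}_{min}=-\frac1m\log Z(m,\|z\|^2)\equiv 0$ on the whole ball, and the $\partial\bar\partial$ in Step 2 returns the zero matrix, not $B(z)$.

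There is also a sign problem. The Euler--Lagrange equation of \eqref{free_energy} as written gives $p\propto e^{-\lambda\beta_\xi(z)}$, which at $\lambda=m$ is $\propto|1-z^\dagger\xi|^{2m}$, not ${\cal CB}(z)$. The family ${\cal CB}(z)$ minimizes $-\int\beta_\xi(z)p\,d\sigma-\frac1m H(p)$, whose minimum is $0$. With the sign as written, the minimum is $\log(1-\|z\|^2)-\frac1m\log{}_2F_1(-m,-m;m;\|z\|^2)$, which is not the Bergman potential either.

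Your Step 3, done exactly rather than ``up to a pluriharmonic term'', shows nothing can be rescued in this framework. Write $P(z,\xi)=p(\xi\,|\,z)$ for \eqref{Cauchy-Bergman} and $J_h$ for the Jacobian of $h$ on $\partial\mathbb{B}^m_{\mathbb C}$. The identity $P(h(z),h(\xi))J_h(\xi)=P(z,\xi)$ gives $\beta_{h(\xi)}(h(z))=\beta_\xi(z)-\frac1m\log J_h(\xi)$. Also $H(h_*p)=H(p)+\int p\log J_h\,d\sigma$. At $\lambda=m$ the two $\log J_h$ contributions to $-\int\beta_\xi(z)p\,d\sigma-\frac1m H(p)$ cancel exactly. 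So this functional is exactly invariant under $(z,p)\mapsto(h(z),h_*p)$, hence ${\cal F}_{min}(h(z))={\cal F}_{min}(z)$. By transitivity ${\cal F}_{min}$ is constant, and a constant generates no metric.

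The exponential-family justification in Step 2 fails for the same reason: $z$ is not a natural parameter of ${\cal CB}(z)$, and indeed $\partial\bar\partial\log Z=0\neq F_{CB}(z)$. A minor point: $-m\log(1-\|z\|^2)=\frac{m}{m+1}\log\big((m+1)K(z,z)\big)$, so no additive constant makes your Step 1 identity true.

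\textbf{What does hold is a different statement.} The K\"ahler potential enters only through the $\xi$-independent term $\log(1-\|z\|^2)$ of $\beta_\xi(z)$. Equivalently, it enters through the normalizer of the $\xi$-dependent factor, $\log{}_2F_1(m,m;m;\|z\|^2)=-m\log(1-\|z\|^2)$. If you replace $\beta_\xi(z)$ by the reduced energy $\log|1-z^\dagger\xi|^2$, the minimum becomes $\log(1-\|z\|^2)$, the value claimed in the text. So the potential is put in by hand, not produced by the minimization.

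The correct route from the canonical family to the metric is pluriharmonicity, not a log-partition Hessian. Write $\log p(\xi\,|\,z)=m\log(1-\|z\|^2)-m\log|1-z^\dagger\xi|^2$; the second term is pluriharmonic in $z$. Therefore $-\partial_{z_i}\partial_{\bar z_j}\log p(\xi\,|\,z)=-m\,\partial_{z_i}\partial_{\bar z_j}\log(1-\|z\|^2)$ for every $\xi$. Take expectations and use $\mathbb{E}[\partial_{z_i}\log p\,\partial_{\bar z_j}\log p]=-\mathbb{E}[\partial_{z_i}\partial_{\bar z_j}\log p]$, obtained by differentiating $\int p\,d\sigma=1$. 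This gives $F_{CB}(z)$, i.e.\ the Bergman metric with potential $-m\log(1-\|z\|^2)$. Its invariance then follows from \eqref{phia} as you planned in Step 3. This proves that the Fisher geometry of the canonical family is the Bergman geometry. It does not show that this geometry is encoded in the minimal free energy, which is identically zero at $\lambda=m$.
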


\subsection{Berezin kernel as a measure of overlap between Cauchy-Bergman distributions} \label{Poisson-Szego-sub-sect}

As shown in the previous Section, the information manifold ${\cal CB}(z)$ of the Cauchy-Bergman distributions is the Bergman ball. 

On the other hand, densities \eqref{Cauchy-Bergman} are boundary integral kernels for a point $\xi \in \partial \mathbb{B}_{\mathbb{C}}^m$. These densities have a real-analytic continuations from the boundary sphere to the interior ball:
\begin{equation}
\label{Poisson_Szego}
B(z,\zeta) = \left( \frac{(1-\|z\|^2)(1-\|\zeta\|^2)}{|1-z^\dagger \zeta|^2} \right)^m, \quad z,\zeta \in {\mathbb B}_{\mathbb C}^m. 
\end{equation}

The formula \eqref{Poisson_Szego} defines real-valued kernels on the Bergman balls, which are usually referred to as Berezin (or Poisson-Bergman) kernels \cite{Krantz2} and serve as integral kernels for the Berezin transform \cite{Zhu,AIM}. 

In order to introduce the notion of similarity (overlap) between two probability measures on ${\cal CB}(z)$ we recall the Bhattacharyya coefficient between two densities \cite{Bhattacharyya}:
\begin{equation}
\label{Bhatt_coef}
BhattC(z_1,z_2) = \int_{\partial \mathbb{B}^m_{\mathbb C}} \sqrt{p(\xi \, | \, z_1)p(\xi \, | \, z_2)} d \sigma(\xi), \; \mbox{ where } z_1,z_2 \in \mathbb{B}^m_{\mathbb{C}}.
\end{equation}
For the Cauchy-Bergman densities \eqref{Cauchy-Bergman} the integrand in \eqref{Bhatt_coef} can be evaluated as:
$$
\sqrt{p(\xi \, | \, z_1)p(\xi \, | \, z_2)} = c_m \frac{(1-\|z_1\|^2)^{m/2}(1-\|z_2\|^2)^{m/2}}{|1-z_1^\dagger \xi|^m|1-z_2^\dagger \xi|^m}.
$$
Plugging this into \eqref{Bhatt_coef} and using the Forelli-Rudin summation formula \cite{FR}:
$$
\int_{\partial \mathbb{B}^m_{\mathbb C}} \frac{1}{|1-z_1^\dagger \xi|^m|1-z_2^\dagger \xi|^m} d \sigma(\xi) = \frac{1}{c_m |1-z_1^\dagger z_2|^m}
$$
we can evaluate the spherical integral \eqref{Bhatt_coef} to get:
\begin{equation}
\label{Bhatt_coef_Poisson_Szego}
BhattC(z_1,z_2) = \left( \frac{(1-\|z_1\|^2)(1-\|z_2\|^2)}{|1-z_1^\dagger z_2|^2} \right)^{m/2} = B(z_1,z_2)
\end{equation}
where $B(\cdot,\cdot)$ is the Berezin kernel \eqref{Poisson_Szego}.

Therefore, kernel \eqref{Poisson_Szego} encodes similarity between two probability measures on the manifold ${\cal CB}(z)$. 

\begin{remark}
Real-valued kernel \eqref{Poisson_Szego} is the squared modulus of the complex-valued normalized Bergman kernel in the ball \cite{Zhu}:
\begin{equation}
\label{normal_Bergman_kernel}
K_{nB}(z,\zeta) = \frac{K(z,\zeta)}{\sqrt{K(z,z)}\sqrt{K(\zeta,\zeta)}} = \left( \frac{(1-\|z\|^2)(1-\|\zeta\|^2)}{1-z^\dagger \zeta} \right)^m, \quad z,\zeta \in {\mathbb B}_{\mathbb C}^m.
\end{equation}
Kernels \eqref{normal_Bergman_kernel} measure an overlap between two coherent quantum states \cite{Perelomov,Perelomov2}. 
Indeed, \eqref{normal_Bergman_kernel} can be written as scalar product
$$
K_{nB}(z,\zeta) = \phi(\zeta)^\dagger \phi(z) = \sum_{\alpha} \phi^{(\alpha)}(\zeta) \overline {\phi^{(\alpha)}(z)},
$$
with functions
$$
\phi^{(\alpha)}(z) = (1-\|z\|^2)^{\frac{m+1}{2}} \sqrt{\frac{(|\alpha| + m)!}{\alpha! \cdot m!}} \cdot z_1^{\alpha_1} z_2^{\alpha_2} \cdots z_m^{\alpha_m}
$$
where $\alpha \in {\mathbb N}_0^m$ is multi-index with $|\alpha| = \sum_{j=1}^m \alpha_j$ and $\alpha ! = \alpha_1! \alpha_2! \cdots \alpha_m!$.

Their squared modulus \eqref{Poisson_Szego} is interpreted as transition probability between states $z$ and $\zeta$.
\end{remark}

\subsection{RKHS associated with the Berezin reproducing kernel} \label{RKHS-sub-sect}

Since the Berezin kernel is symmetric and positive definite, the Moore-Aronszajn theorem \cite{HSS} states that there exists a unique Reproducing Kernel Hilbert Space (RKHS) where it is a reproducing kernel. The structure of this RKHS is well known. It consists of square integrable functions which are annihiliated by the Laplace-Beltrami operator on $\mathbb{B}_{\mathbb{C}}^m$:
$$
\triangle_M = 4 (1-\|z\|^2) \sum (\delta_{jk} - z_j \bar z_k) \frac{\partial^2}{\partial z_j \partial z_k}.
$$ 
Functions annihiliated by the above operator are referred to as $M$-harmonic functions. We will denote this Hilbert space by ${\cal H}_M$. They posses boundary limits almost everywhere on the sphere $\partial \mathbb{B}_{\mathbb{C}}^m$.

The inner product that satisfies the reproducing property $\langle f,K(\cdot,w) \rangle_{{\cal H}_M} = f(w)$ is defined via boundary extensions of the functions to the sphere:
$$
\int_{\partial \mathbb{B}^m_{\mathbb C}} f^*(\xi) \overline{g^*(\xi)} d \sigma(\xi),
$$
where $f^*(\xi) = \lim_{r \to 1} f(r \xi)$ denotes the boundary limit function.

\section{Holomorphic-information-geometric model of cognition and decision-making}

K\" ahler geometry of Bergman balls and favorable geometric properties of MaxEnt distributions offer an expressive and tractable framework for modeling cognitive processes. This geometry can capture various counterintuitive phenomena and encode latent hierarchies, uncertainties, or contextual ambiguities. The RKHS theory in Bergman balls provides a strong theoretical background.

In the remainder of this paper, we focus on one possible interpretation related with personal beliefs within the broader sociological landscape. For the sake of elegance and simplicity, the underlying mathematical framework is restricted to equilibrium states at the canonical energy level.

\subsection{The model: beliefs, states of the world and mind}

We conceive the boundary sphere $\partial \mathbb{B}^m_{\mathbb{C}}$ as an infinite horizon composed of pure states, where each point represents an "idealistic (platonic) image of the world". Roughly, this can be conceptualized in the following way. There is a given long (possibly infinite) sequence of questions, each of them allowing three possible responses: "agree", "disagree", or "neutral/indifferent". Points on the sphere correspond to a unique trinary sequence of answers. In such a setup the notions of similar, antipodal and orthogonal images of the world are transparent.

Accordingly, the interior ball $\mathbb{B}^m_{\mathbb{C}}$ defines the space of "coherent states of the mind". In other words, points in the ball encode individual beliefs with respect to images of the world. The center of the ball corresponds to complete ambivalence, where all outcomes are equally likely. On the other extreme, points with moduli close to one correspond to highly deterministic states of the mind. Persons with such beliefs have nearly deterministic answers to the questions. Notice that this includes not only radicals, but also moderate or apathetic persons. The last group would respond "neutral/indifferent" to the majority of questions. 

With each coherent state of the mind corresponding to $z \in \mathbb{B}^m_{\mathbb{C}}$ we associate the probability density \eqref{Cauchy-Bergman} parametrized by this point. When an individual in the state $z$ is asked a question from the list, they sample a random point on the sphere from \eqref{Cauchy-Bergman}. Since points on the sphere are associated with the prescribed list of answers, this sample determines the response. Similarity between coherent states of the mind is defined by the kernel \eqref{Poisson_Szego}. 

Finally, $M$-harmonic functions represent collective values. Each function in this space represents a collective evaluative landscape or a socio-political narrative spanning the entire society. When evaluated at a specific point in the Bergman ball, this function computes the degree of cognitive resonance, or alignment between the individual state of the mind and the collective value map. 


Within this interpretation, the norm of the function $\| f\|_{{\cal H}_M}$ serves as an indicator of social polarization or ideological intensity. A low-norm function represents a smooth evaluative landscape where value transitions across the cognitive space are gradual, reflecting low friction within society. Conversely, a high-norm function indicates a  sharp, volatile landscape characterized by steep gradients. In the socio-political context, these high-norm functions model highly polarizing and divisive issues that fragment the society. The pointwise constraints $f(u_i) = c_i$ in RKHS encode empirical facts that bind the global value map to known real-world data. 
Function $f$ propagates these constraints across the interior of the ball, implicitly dictating how the surrounding, unmeasured individuals will rationally resonate with the collective narrative based on their proximity to the anchor points.

Now, suppose that the collective landscape is described by a function $f$ in RKHS. Since it is an $M$-harmonic function, it has the radial boundary limit $f^*(\xi) = \lim_{r \to 1} f(r \xi)$. The function $f^*$ on the boundary sphere defines the value (acceptability) of certain platonic ideologies within the society.

On the other hand, if we know the density values across all ideologies, the collective landscape can be recovered via the boundary integral:
$$
f(z) = \int_{\partial \mathbb{B}_{\mathbb{C}}^m} p(\xi \, | \, z) f^*(\xi) d \sigma(\xi)
$$
where $p(\xi \, | \, z)$ is the Cauchy-Bergman density \eqref{Cauchy-Bergman}.

\subsection{The principle of minimal collective cognitive effort} \label{Min-effort-sub-sect}

We further postulate that a society tends to occupy the smoothest cognitive landscape under constraints at anchor points. Mathematically this means that the collective landscape will be described by the minimal norm function in RKHS satisfying given constraints.

For instance, consider a society with a single anchor point. We may conceive that this point corresponds to a single narrative that everyone accepts. For instance, it could be the consensual recognition that all of them are citizen of the same state, while all other standpoints remain completely open. 

The collective landscape for such a society is described by the minimal norm function in ${\cal H}_M$ under this single constraint:
\begin{equation}
\label{min_norm_RKHS}
\min_f \| f \|_{{\cal H}_M}, \; \mbox{ subject to  } \; f(u)=c.
\end{equation}
The representer theorem \cite{HSS} states that the solution to the above optimization problem is
$$
\hat f(z) = \alpha B(z,u)
$$
where $\alpha = c/B(u,u)$ and $B(\cdot,\cdot)$ is the Berezin kernel \eqref{Poisson_Szego}. 

Hence, the minimal norm function is:
\begin{equation}
\label{min_norm_function}
\hat f(z) = c \frac{B(z,u)}{B(u,u)} = c B(z,u)
\end{equation}
with the norm
$$
\| \hat f \|_{{\cal H}_M} = \frac{c^2}{B(u,u)} = c^2,
$$
where we used that $B(u,u)=1$. Hence, the minimal norm does not depend on the point $u \in \mathbb{B}^m_{\mathbb{C}}$. In other words, the minimal norm value is the same for the case when the consensual narrative is vague or self-evident (anchor point close to the center of the ball) and when the narrative is narrow and specific (a point far from the center). 

The norm can also be evaluated from the limit of $\hat f$ to the boundary sphere:
$$
\| \hat f \|^2 = \int_{\partial \mathbb{B}_{\mathbb{C}}^m} |\hat f^*(\xi)|^2 d \sigma(\xi).
$$
This unveils the duality between the energy (norm) in RKHS and entropy in the latent states space. The minimal norm is the smoothest (least energy) function which satisfies the constraint $f(u)=c$ . The solution \eqref{min_norm_function} shows it is the Berezin kernel. This kernel has the radial limit given by the Cauchy-Bergman density centered at $u \in {\mathbb B}^m$. 

In general, there is more than one anchor point in a society. Accordingly, the interpolation problem is written as
$$
\min_f \| f \|_{{\cal H}_M}, \; \mbox{ subject to  } \; f(u_1)=c_1,\dots,f(u_k)=c_k.
$$
The representer theorem \cite{HSS} states that the solution is the linear combination of the Berezin kernels (coherent states)
$$
\hat f(z) = \sum_{j=1}^k \alpha_j B(z,u_j).
$$
Coefficients $\alpha_j$ are solutions of the system of linear equations:
$$
B \bm{\alpha} = \textbf{c},
$$
where $B_{ij} = B(u_i,u_j)$ is the Grammian matrix and vectors $\bm{\alpha} = (\alpha_1,\dots,\alpha_k), \, \textbf{c} = (c_1,\dots,c_k).$
Notice that the Grammian $B$ is regular if points $u_1,\dots,u_k$ are different.

The minimal norm value equals $\|\hat f\|^2_{{\cal H}_M} = {\textbf c}^\dagger B^{-1} {\textbf c}$.

Therefore, the minimal norm solution is a linear combination of coherent states (Berezin kernels). 

\subsection{Quantum-information-theoretic analogies}

Obviously, there is a strong analogy between our model and quantum-theoretic framework of $SU(m,1)$-coherent states in mathematical physics \cite{Perelomov}. Under this interpretation, the boundary sphere $\partial \mathbb{B}^m_{\mathbb{C}}$ acts as the space of pure quantum states, whereas the interior ball $\mathbb{B}_{\mathbb{C}}^{m}$ represents the mixed states space of coexisting, superimposed beliefs. The vacuum state is the origin of the ball, corresponding to the completely ambivalent state of the mind. The act of answering a question constitutes a quantum-like measurement, triggering a collapse from a mixed state $z$ onto a pure state defined by a boundary point.

The crucial modeling choice was made in subsection \ref{Poisson-Szego-sub-sect} where we used the Bhattacharyya coefficient as a measure of similarity between coherent states of the mind. Such a choice led to the real-valued kernel \eqref{Poisson_Szego} and the space of real-analytic $M$-harmonic functions. This choice implies that minimal norm functions in subsection \ref{Min-effort-sub-sect} are real linear combinations of real-valued kernels. Such functions can be interpreted as densities over the space of coherent states. 

Using a real-valued reproducing kernel rather than the complex-valued one imposes some modeling limitations, but has certain computational advantages. In particular, real-valued kernels are convenient for classification and regression using the kernel support vector machines (SVM). Furthermore, they add conceptual clarity as Berezin kernels are precisely transition probabilities between two coherent states \cite{BZ}. Therefore, the Born's rule is already incorporated in the model. Finally, the Calabi's diastasis between two coherent states $u$ and $z$ is expressed as the negative logarithm of the this transition probability: $
D_C = - \log K(z,u),$ where $K(\cdot,\cdot)$ is given by \eqref{Poisson_Szego}. Suppose that we have $k$ coherent states $z_1,\dots,z_k$. Then the minimum of the function:
$$
F(u) = - \sum_{i=1}^k D_C(u,z_i)
$$
is precisely the holomorphic barycenter of points $z_1,\dots,z_k$ as introduced in \cite{JK,Jacimovic}. We also refer to the recent preprint \cite{CY} which reports novel insight into information-theoretic contest of the Bergman geometry. Recalling the interpretation of the Calabi's diastasis as the K-L divergence, this has clear interpretation: the holomorphic barycenter is the information filter which distills states $z_1,\dots,z_k$ with minimal surprisal (in terms of active inference \cite{Friston}. 

On the downside, the model with real-valued kernel has does not have sufficient expressive power to capture directed influences within a society and some quantum-like effects in cognition. This is discussed in the next Section.

\section{Conclusion and outlook}

In the first part of paper we presented families of MaxEnt probability distributions on spheres with the fixed value of expected Busemann function to a given reference point in the interior ball. We demonstrated that this hyperbolic-geometric energetic constraint yields a highly tractable family of Boltzmann-Gibbs distributions. 

In particular, wrapped Cauchy distributions (Poisson kernels), the most natural and computationally tractable family of densities on the circle, arise as canonical Boltzmann-Gibbs distributions on the circle under the Busemann energy constraint in the interior Poincar\' e disc. 

The Poincar\' e disc is the minimal model of hyperbolic geometry, a manifold of complex dimension one. In higher dimensions there are two non-equivalent models of hyperbolic balls. Both of them reduce to the Poincar\' e disc for real dimension $d=2$, or complex dimension $m=1$. We analyzed MaxEnt families for both cases and emphasized the differences. For instance, the canonical equilibrium state in the Bergman ball induces a perfect geometric harmony with free energy equal to the negative K\" ahler potential. On the other hand, free energy levels at the canonical temperature in Poincar\' e balls depend on the dimension and contain a radius-dependent polynomial term.

In the second part of the paper we focused on Bergman balls to present geometric model of collective cognition. An analogous model can be developed in the Poincar\' e ball as well. This is the matter of modeling choice, as both manifolds are attractive for sophisticated geometric modeling. However, the Bergman ball provides more expressive framework, encompassing both hyperbolic and quantum representations. Unlike Poincar\' e's, Bergman balls are anisotropic, which boosts their expressive power.

\subsection{Limitations of the presented model with an outlook on possible upgrades and applications}

In the present paper we presented the basic model, which can be built upon for further upgrades. In its current form, the model has a limited capacity to capture certain effects. However, it is very flexible. Below, we list some of the limitations and briefly indicate the ways of upgrading the model to take them into account.
 
\begin{itemize}

\item{Cognitive barriers and identity-protective cognition.} 

People are averse to accepting uncomfortable facts. They unconsciously credit or dismiss evidence to protect their status, belonging, and bonds within a social group. This effect can be incorporated in our model by introducing state dependent constraints in RKHS. This would further lead to the state-dependent matrix $K$ in subsection \ref{Min-effort-sub-sect} instead of the constant one. Consequently, the linear system $K \bm{\alpha} = \textbf{c}$ would turn into a nonlinear one.

\item{Actions: active inference.} 

The Busemann free energy \eqref{free_energy} can be a cornerstone of geometric active inference inspired by the seminal framework of Karl Friston \cite{Friston}. In our geometric setup, this would include the K\" ahler potential as an equilibrium energy level. However, in order to implement the full active inference framework, one needs to incorporate actions into the free energy functional and to differentiate it w.r. to both perception and action. It can be realized in an elegant and tractable way, obtaining closed-form expressions. This generalization towards spherical (or quantum) active inference is a focus of ongoing research.

\item{Asymmetric influences and flows.} 

One exciting opportunity is modeling cognitive or sociological evolution as flows in RKHS's or in Bergman balls. Our framework could naturally incorporate influences between individuals or groups. However, real-valued reproducing kernels can not model asymmetric influences. In order to model this, one should utilize complex-valued normalized Bergman kernel \eqref{normal_Bergman_kernel}.

\item{Quantum-like effect: measurement changes the system.} 

Since the foundation of quantum mechanics the problem of measurement puzzled scientists and philosophers. This widely known paradox arises from the fact that an act of observation changes the system being measured. Analogous effects have been experimentally observed and documented in human cognition \cite{Khrennikov,BB}. For instance, the quantum-like nature of human cognition manifests when the people are asked two questions in different orders. The probability of responding positively to a question A depends significantly on if this question is asked before or after question B. Therefore, asking a question changes someone's state of the mind. In the same manner, the act of voting changes a society. There are several ways to incorporate this in our model. One interesting idea is to explain it as cyclic evolution and appearance of quantum geometric phase (holonomy) \cite{Berry}. Bergman balls are very natural setup to model this effect. However, this might be conceptually demanding. A more intuitive way is the mapping between pure states on the sphere and constraints in the RKHS, as in the first point in this list.

\item{Interference of the cognitive states.}

Another quantum-like effect in cognition is interference, which can be constructive or destructive. As already pointed out, real-valued kernel entail real solutions in \ref{Min-effort-sub-sect}. In such a setup, interference can not be modeled directly in RKHS. A complex-valued reproducing kernel is associated with the RKHS of holomorphic functions. In such a setup, functions in RKHS are analogues of wave functions and interference appears naturally. Although this allows for modeling quantum-like effects, it is associated with potentially severe computational difficulties in large models. Sampling is a good example. With real linear combinations, we can efficiently sample points on the sphere from mixtures of the Cauchy-Bergman distributions. If the coefficients are complex one must apply expensive acceptance-rejection schemes.

\end{itemize}

Overall, the model can be significantly upgraded by considering both kernels \eqref{Poisson_Szego} and \eqref{normal_Bergman_kernel} and their associated RKHS's for modeling various effects.

The bottomline of this study is that hyperbolic (both Poincar\' e and Bergman) balls and associated theory of reproducing kernels provide a powerful framework for compact ML models and cognitive AI architectures with potential applications in sociology, psychology and other fields. The potential of spherical and hyperbolic representations in deep learning has been widely recognized in the recent decade. For instance, the cosine metric and representations of words by unit vectors are routinely used in most of LLM's. As a simple preliminary insight, spherical representations are natural choice whenever entities or concepts have unique antipodes. We also point out that the generalized Kuramoto models on spheres \cite{LMS} can serve as a computational tool in spherical and hyperbolic ML architectures. 


Bergman balls are particularly attractive manifolds for modeling subtle concepts and complicated (often counterintuitive) behavioral patterns. The underlying deep theory of bounded symmetric domains in complex vector spaces \cite{Zhu,Satake} has attracted attention of mathematicians since for one century. In the second half of the 20th century this theory found its interpretation in the formalism of coherent states in quantum physics \cite{Perelomov}. Nowadays, this abstract mathematical theory is very well established, while continuously evolving with new insights and beautiful theorems. With the explosive progress of cognitive AI, this theory in its information-theoretic disguise can provide a solid foundation for the deployment of compact and sophisticated models.

\section*{Appendix: Calculation of derivatives of hypergeometric functions}

For the sake of completeness we derive formulae on derivatives of hypergeometric functions used in the main text.

\begin{proposition}
\label{hyp_diff}
Let $m$ be an integer. Then 
$$
\frac{\partial}{\partial \lambda} \, \left. _2 F_1(\lambda,\lambda;m,r^2) \right|_{\lambda=m} = 2 \frac{\log(1-r^2)}{(1-r^2)^m},
$$
where $_2 F_1(\cdot,\cdot;\cdot;\cdot)$ are Gauss hypergeometric series.
\end{proposition}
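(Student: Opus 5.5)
The plan is to differentiate the Gauss series termwise and resum. Write $x=r^2$ and
$$
{}_2F_1(\lambda,\lambda;m;x)=\sum_{n\ge 0}\frac{(\lambda)_n^2}{(m)_n\,n!}\,x^n .
$$
For $|x|<1$ the series converges uniformly in $\lambda$ near $m$, so termwise differentiation is legitimate. Using $\partial_\lambda(\lambda)_n=(\lambda)_n\sum_{k=0}^{n-1}\frac{1}{\lambda+k}$, we get
$$
\partial_\lambda\,{}_2F_1\Big|_{\lambda=m}=2\sum_{n\ge0}\frac{(m)_n}{n!}\Big(\sum_{k=0}^{n-1}\frac{1}{m+k}\Big)x^n
=2\sum_{n\ge0}\binom{n+m-1}{n}\big(H_{n+m-1}-H_{m-1}\big)x^n .
$$
The factor $2$ comes from the two upper parameters being equal. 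One copy of $(m)_n$ cancels the denominator, so the remaining coefficient is the binomial coefficient of $(1-x)^{-m}$.

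The key step is to recognize this power series in closed form. Differentiate $(1-x)^{-\mu}=\sum_n\frac{(\mu)_n}{n!}x^n$ with respect to $\mu$ and set $\mu=m$. The left side becomes $-\log(1-x)\,(1-x)^{-m}$. The right side is exactly $\sum_n\frac{(m)_n}{n!}\big(\sum_{k<n}\frac1{m+k}\big)x^n$. Comparing the two gives
$$
\partial_\lambda\,{}_2F_1(\lambda,\lambda;m;x)\Big|_{\lambda=m}=2\,\partial_\mu(1-x)^{-\mu}\Big|_{\mu=m}=\pm\,2\,\frac{\log(1-x)}{(1-x)^m},
$$
so the identity reduces to a one-line $\mu$-derivative. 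As a cross-check I would verify $m=1$ directly, where the series is $2\sum_n H_n x^n$, and $m=2$, where it is $2\sum_n (n+1)(H_{n+1}-1)x^n$.

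The main obstacle is the sign. Every coefficient in the termwise series is nonnegative, while $\log(1-x)<0$. So the direct computation produces $-2\log(1-r^2)/(1-r^2)^m$, which is the value used in the main text when computing \eqref{energy_level_Bergman}. That value agrees with the Proposition as printed only if $\log$ there is read as $\log\frac{1}{1-r^2}$. As literally printed, with the ordinary logarithm, the $+2$ identity is false: at $m=1$ the derivative is $2\sum_n H_n x^n>0$ while $2\log(1-x)/(1-x)<0$. My write-up would therefore carry out the derivation above, record the resulting formula as $2\log\frac{1}{1-r^2}\,(1-r^2)^{-m}$, and point out explicitly that this is the only reading under which the Proposition's statement holds.
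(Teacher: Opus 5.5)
Your proposal is correct and is essentially the paper's argument, carried out at the level of coefficients. The factor $2$ from differentiating $(\lambda)_n^2$ is the paper's chain rule plus symmetry in the two upper parameters, and cancelling one $(m)_n$ against the denominator is the paper's reduction ${}_2F_1(\lambda_1,m;m;r^2)=(1-r^2)^{-\lambda_1}$; after that, both arguments just differentiate $(1-r^2)^{-\mu}$ in $\mu$.

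Your sign diagnosis is also right. The paper's own proof ends with $-2\log(1-r^2)/(1-r^2)^m$, and the main text uses that value too, so the $+2$ in the printed statement is a typo.
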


\begin{proof}

Consider the hypergeometric function $_2 F_1(\lambda_1,\lambda_2;m;r^2).$

Using the chain rule, we have:
$$
\frac{\partial}{\partial \lambda} \, \left. _2 F_1(\lambda,\lambda;m;r^2) \right |_{\lambda=m} = \frac{\partial}{\partial \lambda_1} \, \left. _2 F_1(\lambda_1,m;m;r^2) \right |_{\lambda_1=m} + \frac{\partial}{\partial \lambda_2} \, \left. _2 F_1(m,\lambda_2;m;r^2) \right |_{\lambda_2=m}.
$$
Since the first two variables of the Gauss hypergeometric function are symmetric, their partial derivatives are identical when $\lambda_1=\lambda_2$. Using this fact and the above equality we have that:
$$
\frac{\partial}{\partial \lambda} \, \left. _2 F_1(\lambda,\lambda;m;r^2) \right |_{\lambda=m} = 2 \frac{\partial}{\partial \lambda_1} \, \left. _2 F_1(\lambda_1,m;m;r^2) \right |_{\lambda_1=m}.
$$
We have that $_2 F_1(\lambda_1,m;m;r^2) = (1-r^2)^{-\lambda_1}$. Substitute this back into the derivative equality to get:
$$
\frac{\partial}{\partial \lambda} \, \left. _2 F_1(\lambda,\lambda;m;r^2) \right |_{\lambda=m} = 2 \frac{\partial}{\partial \lambda_1} \left. (1-r^2)^{-\lambda_1} \right|_{\lambda_1=m}.
$$
Now the right hand side of the above equality can be simply differentiate to get
$$
\frac{\partial}{\partial \lambda} \, \left. _2 F_1(\lambda,\lambda;m;r^2) \right |_{\lambda=m} = \left. -2(1-r^2)^{-\lambda_1} \log(1-r^2) \right|_{\lambda_1=m} = - 2 \frac{\log(1-r^2)}{(1-r^2)^m}.
$$

\end{proof}

\begin{proposition}
Let $d$ be an integer. Then
$$
\frac{\partial}{\partial \lambda} \left. _2 F_1(\lambda,\lambda-\frac{d-2}{2};\frac{d}{2};r^2) \right|_{\lambda=d-1} = - \frac{2 \log(1-r^2)}{(1-r^2)^{d-1}} + \frac{P_{d-3}(r^2)}{(1-r^2)^{d-1}}.
$$ 
where 
\begin{equation}
\label{P_d-3}
P_{d-3}(r^2) = \sum_{j=1}^{d-2} \frac{1}{j} \sum_{k=1}^j (-1)^k \binom{j}{k} r^{2k}.
\end{equation}
\end{proposition}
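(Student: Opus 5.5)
The plan is to reduce everything to the closed form already used in the proof of Proposition~\ref{hyp_diff}, by first applying Euler's transformation
$$
{}_2F_1(a,b;c;x)=(1-x)^{c-a-b}\,{}_2F_1(c-a,c-b;c;x).
$$
Put $x=r^2$, $a=\lambda$, $b=\lambda-\frac{d-2}{2}$, $c=\frac d2$. Then $c-a-b=d-1-2\lambda$, $c-a=\frac d2-\lambda$ and $c-b=d-1-\lambda$. This gives
$$
{}_2F_1\Big(\lambda,\lambda-\tfrac{d-2}{2};\tfrac d2;x\Big)=(1-x)^{d-1-2\lambda}\,G(\lambda),\qquad G(\lambda)={}_2F_1\Big(\tfrac d2-\lambda,\,d-1-\lambda;\tfrac d2;x\Big).
$$
The key point is that at $\lambda=d-1$ the second upper parameter of $G$ vanishes, so $G(d-1)=1$. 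This is consistent with \eqref{hyp_Poin_d-1}.

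Next I differentiate this product by the Leibniz rule at $\lambda=d-1$. The prefactor contributes
$$
-2\log(1-x)\,(1-x)^{-(d-1)}\,G(d-1)=-\frac{2\log(1-x)}{(1-x)^{d-1}}.
$$
This is exactly the logarithmic term in the statement, and it comes out with no further work.

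The remaining term is $(1-x)^{-(d-1)}G'(d-1)$. I would compute $G'(d-1)$ termwise from the series. Write $\varepsilon=d-1-\lambda$. Then $(\varepsilon)_n=\varepsilon\,(n-1)!+O(\varepsilon^2)$ for $n\ge 1$, and $d\varepsilon/d\lambda=-1$. Only the derivative of the second Pochhammer symbol survives, and it gives
$$
G'(d-1)=-\sum_{n\ge1}\frac{(1-\frac d2)_n}{(\frac d2)_n}\,\frac{x^n}{n}.
$$
Uniform convergence for $|x|<1$ justifies the termwise differentiation. The proof is then finished by identifying this series with $P_{d-3}(x)$. For that I would first rewrite the definition: since $\sum_{k=1}^j(-1)^k\binom jk x^k=(1-x)^j-1$, we have $P_{d-3}(x)=\sum_{j=1}^{d-2}\frac{(1-x)^j-1}{j}$. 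I would then compare coefficients of $x^n$ on both sides.

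I expect this identification to be the main obstacle, and I am not confident it holds as stated. For even $d$ the series terminates, because $(1-\frac d2)_n=0$ for $n\ge d/2$. For odd $d$, however, it is a genuine infinite series and not a polynomial. Even in the smallest even test case the two sides disagree. For $d=4$ the series gives $\tfrac12 x$, while the rewritten definition gives $-2x+\tfrac12x^2$. So my first step would be to check $d=3$ and $d=4$ by hand. If the mismatch persists, I would record the correct form of the correction term, namely $-\sum_{n\ge1}\frac{(1-d/2)_n}{(d/2)_n}\frac{x^n}{n}$, in place of $P_{d-3}$. I would then trace how this changes the canonical energy level \eqref{energy_level_log_partition1}.
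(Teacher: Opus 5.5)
Your computation is correct, and your doubt is justified: the statement is false for every $d\ge 3$. It holds for $d=2$, where both $P_{-1}$ and your correction vanish. Your $d=4$ check is already a counterexample. So the right endpoint is not a proof but the corrected formula you derived:
$$
\frac{\partial}{\partial\lambda}\,{}_2F_1\Big(\lambda,\lambda-\tfrac{d-2}{2};\tfrac d2;x\Big)\Big|_{\lambda=d-1}=\frac{-2\log(1-x)+G'(d-1)}{(1-x)^{d-1}},\qquad G'(d-1)=-\sum_{n\ge1}\frac{(1-\frac d2)_n}{(\frac d2)_n}\,\frac{x^n}{n}.
$$
You can avoid case-by-case checks by comparing coefficients of $x$. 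On the left this coefficient is $\frac{\partial}{\partial\lambda}\big[\lambda(\lambda-\frac{d-2}{2})/\frac d2\big]$ at $\lambda=d-1$, which equals $3-\frac2d$. Your formula reproduces it as $2+\frac{d-2}{d}$. The stated right-hand side gives $2-(d-2)=4-d$, and the two agree only when $d=2$.

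Your structural remarks also hold in general. For odd $d$, every coefficient $(1-\frac d2)_n$ is nonzero, so the correction is transcendental. For $d=3$ it equals $\log(1-x)-1+\frac{1+x}{2\sqrt x}\log\frac{1+\sqrt x}{1-\sqrt x}$. For even $d\ge4$ it is a polynomial of degree $\frac d2-1$, whereas $P_{d-3}$ has degree $d-2$.

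The paper argues differently, and its error sits exactly at the step you flagged as the obstacle. Its steps are:
\begin{itemize}
\item It writes $\partial_\lambda=\partial_a+\partial_b$.
\item It obtains $\partial_a=-\log(1-x)(1-x)^{-(d-1)}$ correctly from ${}_2F_1(a,\frac d2;\frac d2;x)=(1-x)^{-a}$.
\item It correctly expresses $\partial_b$ as the digamma series $\sum_{n\ge1}\frac{(d-1)_n}{n!}\big(\sum_{k=0}^{n-1}\frac{1}{d/2+k}\big)x^n$.
\item It then asserts, ``changing the order of summation'', the closed form \eqref{derivative_b}.
\end{itemize}
The last step is false: the series has $x$-coefficient $\frac{2(d-1)}{d}$, whereas \eqref{derivative_b} has $3-d$. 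The correct value is $\partial_b=\big(-\log(1-x)+G'(d-1)\big)(1-x)^{-(d-1)}$. So the paper's $\sum_{j=1}^{d-2}\frac{(1-x)^j-1}{j}$ is a misidentification of your $G'(d-1)$.

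Your Euler-transformation route is the better one. It puts both logarithms into the prefactor and leaves a series in which only one Pochhammer symbol vanishes to first order. No digamma sums appear, including their half-integer shifts for odd $d$, and that is what makes the discrepancy visible.

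For the follow-up you plan, \eqref{energy_level_log_partition1} and the listed levels for $d=3,4$ inherit the error. Note also that the density actually written is $p\propto e^{+\lambda\beta_y(w)}$, so $\mathbb{E}[\beta_y(w)]=+\partial_\lambda\log Z$. With $x=\|w\|^2$, the canonical level is then $-\log(1-x)+G'(d-1)$. For example, it is $-\log(1-x)+\frac x2$ when $d=4$; a direct expansion to second order in $\|w\|$ confirms the coefficient $\frac32$ of $x$. For $d=2$ it reduces to the value $-\log(1-r^2)$ stated for the wrapped Cauchy family.
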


\begin{proof}
Fix $d$ and $r^2$ and consider the two variable function $g(a,b) = _2 F_1(a,b;d/2;r^2)$. Then applying the chain rule to $g(\lambda,\lambda-(d-2)/2)$ we have
$$
\frac{\partial g}{\partial \lambda} = \frac{\partial g}{\partial a} + \frac{\partial g}{\partial b}.
$$
In order to compute the derivative w. r. to $a$ set $b=d-1$ and find that $g(a,d/2) = (1-r^2)^{-a}$.

Differentiating this w. r. to $a$:
$$
\frac{\partial}{\partial a} [(1-r^2)^{-a}] = - (1-r^2)^{-a} \log(1-r^2).
$$
Then the derivative at the point $a=d-1$ reads
\begin{equation}
\label{derivative_a}
\left. \frac{\partial g}{\partial a} \right|_{\lambda=d-1} = - \frac{\log(1-r^2)}{(1-r^2)^{d-1}}.
\end{equation}

Now, we pass to the derivative w. r. to $b$. We expand the hypergeometric function into series and differentiate to get
$$
\left. \frac{\partial g}{\partial b} \right|_{b=d/2} = \sum_{n=1}^\infty \frac{(d-1)_n \left(\frac{d}{2}\right)_n}{\left(\frac{d}{2}\right)_n n!} \left[\psi \left(\frac{d}{2} + n\right) - \psi \left(\frac{d}{2}\right) \right] r^{2n},
$$ 
where $\left(d/2\right)_n$ denotes the Pochamer symbol and $\psi$ is the digamma function. 
Canceling Pochamer symbols and writing digamma function as sums, this is rearranged as 
$$
\left. \frac{\partial g}{\partial b} \right|_{b=d/2} = \sum_{n=1}^\infty \frac{(d-1)_n}{n!} \left( \sum_{k=0}^{n-1} \frac{1}{d/2+k} \right) r^{2n}.
$$
Changing the order of summation and taking the sum for the index $n$:
\begin{equation}
\label{derivative_b}
\left. \frac{\partial g}{\partial b} \right|_{b=d/2} = - \frac{\log(1-r^2)}{(1-r^2)^{d-1}} + \frac{1}{(1-r^2)^{d-1}} \sum_{j=1}^{d-2} \frac{(1-r^2)^j -1}{j}.
\end{equation}
Summing \eqref{derivative_a} and \eqref{derivative_b}:
$$
\frac{\partial}{\partial \lambda} \, \left. _2 F_1(\lambda,\lambda-\frac{d-2}{2};\frac{d}{2};r^2) \right|_{\lambda=d-1} = -2 \frac{\log(1-r^2)}{(1-r^2)^{d-1}} + \frac{1}{(1-r^2)^{d-1}} \sum_{j=1}^{d-2} \frac{(1-r^2)^j - 1}{j}.
$$ 
Applying the binom formula, we find that the sum in the second term equals $P_{d-3}(r^2)$ where $P_{d-3}$ is defined by \eqref{P_d-3}. This completes the derivation.

\end{proof}

\section*{Declaration of Generative AI in Scientific Writing}

The author acknowledges the assistance of an AI tool developed by Google in exploring and formatting the mathematical derivations and improving the manuscript’s language. Every step of the mathematical proofs was independently checked and validated by the author, who remains fully responsible for the final manuscript.

\end{document}